\tikzset{snake it/.style={decorate, decoration=snake}}
\newcommand{\IPC}{\textsc{Isometric Path Cover}\xspace}
\newcommand{\dist}[2]{d(#1,#2)}
\newcommand{\SP}[1]{{\cal S}(#1)}
\newcommand{\ipac}[1]{ipacc\left(#1\right)}
\definecolor{dartmouthgreen}{rgb}{0.05, 0.5, 0.06}
  \title{ Additive approximation algorithm for geodesic centers in $\delta$-hyperbolic graphs}
\author{Dibyayan Chakraborty}{School of Computing, University of Leeds, United Kingdom.}{}{}{}
\author{Yann Vax\`{e}s}{Laboratoire d'Informatique et Systèmes, Aix-Marseille Université and CNRS, Faculté des Sciences de Luminy, F-13288 Marseille, Cedex 9, France.}{}{}{}
\keywords{Hyperbolicity, approximation algorithms, Isometric paths, Minimum eccentricity shortest paths.}
\begin{document}

\maketitle

\setstretch{1}

\begin{abstract}

 % For a collection $\mathcal{C}$ of isometric paths and a vertex $v\in V(G)$, the distance between $v$ and $\mathcal{C}$ is the minimum of $\dist{v}{P}$ with $P\in \mathcal{C}$. 
% Is it useful to define the eccentricity of a family of paths before stating the problem ?

For an integer $k\geq 1$, the objective of \textsc{$k$-Geodesic Center} is to find a set $\mathcal{C}$ of $k$ isometric paths such that the maximum distance between any vertex $v$ and $\mathcal{C}$ is minimised. Introduced by Gromov, \emph{$\delta$-hyperbolicity} measures how treelike a graph is from a metric point of view. Our main contribution in this paper is to provide an additive $O(\delta)$-approximation algorithm for \textsc{$k$-Geodesic Center} on $\delta$-hyperbolic graphs. 
On the way, we define a coarse version of the pairing property introduced by Gerstel \& Zaks (Networks, 1994) and show it holds  for $\delta$-hyperbolic graphs. This result allows to reduce the \textsc{$k$-Geodesic Center} problem to its rooted counterpart, a main idea behind our algorithm. 
We also adapt a technique of Dragan \& Leitert, (TCS, 2017) to show that for every $k\geq 1$, $k$-\textsc{Geodesic Center} is NP-hard even on partial grids. 
\end{abstract}

\section{Introduction}

Given a graph $G,$ the \textsc{$k$-Geodesic Center} problem asks to find a collection $\cal C$ of $k$ isometric paths such that the maximum distance between any vertex and $\cal C$ is minimised. 
This problem may arise in determining a set of $k$ "most accessible" speedy line routes in a network and can find applications in communication networks, transportation planning, water resource management and fluid transportation \cite{dragan2017minimum}.
The decision version of this problem asks, given a graph $G$ and two integers $k$ and $R,$ whether there exist a collection $\cal C$ of $k$ isometric paths such that any vertex of $G$ is at distance at most $R$ from $\cal C.$ 

\textsc{$k$-Geodesic Center} is related to several algorithmic problems studied in the literature. \textsc{$k$-Geodesic Center} is a generalisation of \textsc{Minimum Eccentricity Shortest Path} (MESP)  where given an integer $R$, the objective is to decide if there exists an isometric path $P$ such that the maximum distance between any vertex and $P$ is at most $R$ \cite{dragan2017minimum}. Clearly, $1$-\textsc{Geodesic Center} is equivalent to \textsc{MESP}. If, instead of isometric paths, we asks whether there exists a subset of $k$ vertices of eccentricity at most $R,$ we obtain the decision version of \textsc{$k$-center} which is one of the most studied facility location problem in the literature~\cite{hochbaum1985best,hsu1979easy,lkacki2024fully, plesnik1980computational,TaFrLa}. The solution of a $k$-\textsc{Geodesic Center} can be thought of as a relaxation of \textsc{$k$-Center}. $k$-\textsc{Geodesic Center} is also related to \textsc{Isometric Path Cover}, where the objective is to find the minimum number of isometric paths that contains all vertices of the input graph. Study of the algorithmic aspects of \textsc{Isometric Path Cover} has garnered much attention recently~\cite{DBLP:conf/isaac/ChakrabortyD0FG22,dumas2022graphs,fernau2023parameterizing}.

% \textsc{$k$-Geodesic Center} generalizes several problems previously studied in the literature. When $R=0,$ we obtain the decision version of \IPC (IPC) asking whether there exist $k$ isometric paths covering all vertices of $G$ \cite{DBLP:conf/isaac/ChakrabortyD0FG22}. 
% When $k=1$, we obtain the decision version of the \textsc{Minimum Eccentricity Shortest Path} (MESP) problem asking whether there exists a shortest path of eccentricity at most $R$ \cite{dragan2017minimum}. 

All the three problems (i.e. \textsc{IPC, MESP}, and \textsc{$k$-Center}) are NP-hard for general graphs but are known to admit exact polynomial time algorithms when the given graph $G$ is a tree \cite{DBLP:conf/isaac/ChakrabortyD0FG22,dragan2017minimum,WaZh2021}. This raises the question about the complexity of these problems when the input graph is \emph{close to a tree}?
In this paper, we consider the graph parameter \emph{$\delta$-hyperbolicity}~\cite{gromov1987}, which measures how treelike a graph is from a metric point of view. See \Cref{sec:prelim} for a formal definition. Graphs with constant $\delta$-hyperbolicity are called \emph{hyperbolic} graphs. From a practical perspective, the study of $\delta$-hyperbolicity of graphs is motivated by the fact that many real-world graphs are tree-like \cite{AbuDra16,AdcSulMah13,JonLahBon08} or have small $\delta$-hyperbolicity \cite{BorCouCre+15,EdwardsKennedySanie2018,NaSa}. From a theoretical perspective, many popular graph classes like interval graphs, chordal graphs, $\alpha_i$-metric graphs \cite{DrDu-2023}, graphs with bounded tree-length~\cite{dourisboure2007tree}, link graphs of simple polygons ~\cite{chepoi2008diameters} have constant $\delta$-hyperbolicity.

Polynomial time approximation algorithms with an error (additive or multiplicative) depending only on the $\delta$-hyperbolicity of $G$ exist for \textsc{MESP}~\cite{mohammed2019slimness}, \textsc{$k$-Center} \cite{ChepoiEstellon07,EdwardsKennedySanie2018}, and \IPC \cite{chakraborty2023}. Motivated by the above results, in this paper, we provide an additive $O(\delta)$-approximation algorithm\footnote{A feasible solution for a minimization problem is said to be \emph{additive $\alpha$-approximate} if its objective value is at most the optimum value plus $\alpha$. An \emph{additive $\alpha$-approximation algorithm} for a minimization problem is a polynomial time algorithm that produces an additive $\alpha$-approximate solution for every instance of the input.} for \textsc{$k$-Geodesic Center} on $\delta$-hyperbolic graphs for arbitrary $k$. {The same algorithmic approach leads to an exact polynomial time algorithm in case of trees.}
% Namely, we prove the following two theorems:

% Connection to facility location problems, practical motivation.
% The problem of deciging whether there exists a collection $\mathcal C$ of $k$ geodesics such that the maximum distance between any vertex $v$ and  $\mathcal C$ is at most $R$ is a common generalization of the minimum eccentricity shortest path problem (Dragan and Leteirt, 2017) which is obtain for $k=1$, and the isometric path cover problem (Fisher and Fitzpatrick, 2001) which is obtaine for $R=0$. For k=1, connection to minimum distortion embedding on lines problems. 
%-> check the motivation of the papers that introduced these two problems
%-> introduce the notion of pairing, its history and explain why it is useful for our problem

\begin{theorem}
\label{thm:hyperbolic}
Let $G$ be a $\delta$-hyperbolic graph and $k$ be an integer. Then,
    there is a polynomial time $O(\delta)$-additive approximation algorithm for $k$-\textsc{Geodesic Center} on $G.$
\end{theorem}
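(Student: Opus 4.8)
The plan is to reduce the $k$-Geodesic Center problem to a rooted variant and then solve the rooted variant by dynamic programming up to an additive $O(\delta)$ error. First I would formalize the \emph{rooted} version of the problem: fix a base vertex $r$ (or, more robustly, a small set of candidate ``anchor'' vertices obtained from a BFS layering of $G$), and ask for $k$ isometric paths, each of which emanates from a vertex close to $r$, covering $G$ within radius $R$. The key structural ingredient is the coarse pairing property announced in the abstract; I would invoke it to argue that any optimal unrooted solution $\mathcal{C}^*=\{P_1^*,\dots,P_k^*\}$ can be ``re-hung'' from such anchors while changing the covering radius by only $O(\delta)$. Concretely, using $\delta$-thinness of geodesic triangles, each $P_i^*$ can be replaced by an isometric path $P_i$ that (i) passes within $O(\delta)$ of a fixed anchor and (ii) $O(\delta)$-dominates every vertex that $P_i^*$ dominated; the pairing property is what guarantees that the anchors can be chosen consistently (one per path, in a matched way) rather than independently.

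Next I would set up the dynamic program on the rooted instance. Order the BFS layers $L_0,L_1,\dots,L_t$ from the anchor. An isometric path rooted near $r$ visits the layers in essentially monotone fashion, so each of the $k$ paths is, up to $O(\delta)$, determined by its ``trace'' across the layers. I would process layers in increasing order, maintaining as DP state the current endpoints (one per path, up to $O(\delta)$-balls so that the state space is polynomial) together with the set of still-uncovered vertices in recent layers — again only those within $O(\delta)$ of the frontier need to be remembered, because a vertex in layer $L_j$ can only be covered within distance $R$ by a path vertex in a layer $L_{j'}$ with $|j-j'|\le R+O(\delta)$, and by hyperbolicity the relevant portion of the graph near the frontier has bounded ``width'' in a suitable sense. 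Extending each path by one layer is a local choice; I would show the number of relevant local extensions is polynomial and that greedily covering the layer being finalized costs at most $O(\delta)$ over the optimum. Binary-searching on $R$ then yields the additive approximation, and specializing $\delta=0$ (trees) makes every $O(\delta)$ term vanish, giving the exact tree algorithm.

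The main obstacle I anticipate is controlling the interaction between the $k$ paths in the DP: naively the state is a $k$-tuple of frontier positions plus a covering-status bitmask, which is exponential in $k$. The fix I would pursue is to argue that the paths can be \emph{processed one at a time} — peel off one isometric path, charge to it the vertices it covers, recurse on the (coarsely) uncovered remainder — and that doing so loses only $O(\delta)$ per path in a way that telescopes to a single global $O(\delta)$, not $O(k\delta)$. This is exactly where the \emph{coarse} pairing property does the heavy lifting: it lets us match the paths of our constructed solution to those of the optimum so that the per-path charging is consistent and the errors do not accumulate with $k$. Making this telescoping rigorous — in particular, showing that the ``residual'' instance after removing one path is still a $\delta$-hyperbolic instance of the same rooted problem with the covering radius inflated by only an absolute $O(\delta)$, independent of how many paths remain — is the crux, and I would expect the bulk of the technical work (and the careful use of thin triangles and the anticipated ``slimness''/insertion lemmas hinted at by the macros $\slim{\cdot}$ and $\anticp{\cdot}{\cdot}$) to go there.
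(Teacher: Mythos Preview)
Your proposal has a genuine gap: the dynamic program on BFS layers cannot be made polynomial as described. The assertion that ``by hyperbolicity the relevant portion of the graph near the frontier has bounded `width' in a suitable sense'' is false---$\delta$-hyperbolicity imposes no bound on the size of BFS layers (think of a star, or more generally any graph of bounded diameter). So the DP state---a $k$-tuple of frontier positions plus covering status of recent layers---is not polynomially bounded, and you correctly flag this as the obstacle without resolving it. Your fallback, peeling one path at a time and arguing the additive errors telescope to $O(\delta)$ rather than $O(k\delta)$, is precisely the statement that needs proof; invoking the pairing property as a black box to ``match our paths to the optimum'' does not supply the mechanism.

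You also have the reduction and the role of pairing backwards relative to what actually works. The paper's rooted problem asks for $2k-1$ paths sharing a \emph{single common endpoint} $r$ (not $k$ paths each near its own anchor): an optimal $k$-path solution with endpoint set $X$ is replaced by $\{\sigma(r,x):x\in X\setminus\{r\}\}$ for some $r\in X$, and $\delta$-thin triangles give only a $+\delta$ loss (\Cref{lem:path-tree}). This rooted $(2k-1)$-problem is then solved not by DP but by a simple greedy primal--dual loop (\Cref{algo:rooted-paths}) that, for each candidate $r$ and radius $R$, either returns an $(r,R+2\delta)$-cover of size $2k-1$ or certifies an $(r,R)$-packing of size $2k$; binary search plus minimizing over $r$ gives the right $R$. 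The shallow pairing property is used in the \emph{opposite} direction from what you describe: it takes the $2k$ endpoints of the rooted solution and pairs them into $k$ unrooted geodesics, losing only $O(\delta)$ (\Cref{lem:shallow-pairing-bound}). The total additive error is $6\tau(G)+1$, independent of $k$, with no telescoping argument needed.
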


Our algorithm has mainly two stages. In the first stage, we solve the ``rooted'' version of $(2k-1)$-\textsc{Geodesic Center}, where we require that all isometric paths in the solution has a common end-vertex. 
%To solve the $k$-\textsc{Geodesic Center}, we solve (with additive error depending only on $\delta$) the rooted version of the $(2k-1)$-\textsc{Geodesic Center}. 
Then to reduce the number of isometric paths, we use the \emph{shallow pairing} property of $\delta$-hyperbolic graphs. See \Cref{def:pairing}. Intuitively, this property ensures that the $2k$-many end-vertices of the $2k-1$ isometric paths obtained in the first stage can be ``paired'' to obtain $k$ many isometric paths which together provide an additive $O(\delta)$-approximation algorithm for \textsc{$k$-Geodesic Center}. We think that the shallow pairing property could also be interesting in itself and for other algorithmic applications.

We also adapt a technique of Dragan \& Leitert, (TCS '17) to show that for every $k\geq 1$, $k$-\textsc{Geodesic Center} is NP-hard even on \emph{partial grids}.
A graph is a \emph{partial grid} if it is a subgraph of $(k\times k)$-grid for some positive integer $k$.

\begin{theorem}\label{thm:hard}
    For every integer $k\geq 1$, $k$-\textsc{Geodesic Centre} is NP-hard even on partial grids. 
\end{theorem}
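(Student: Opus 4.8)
The plan is to give a polynomial-time reduction from a restricted, still NP-hard variant of satisfiability — concretely \textsc{Planar 3-SAT} (or \textsc{Planar Monotone 3-SAT}) — mirroring the gadget construction that Dragan and Leitert use to prove \textsc{MESP} (that is, \textsc{$1$-Geodesic Center}) NP-hard, and then to realise the resulting graph as a subgraph of a grid, before finally lifting the construction to arbitrary $k$. First I would recall the Dragan--Leitert template: from a formula $\varphi$ one builds a graph with a designated source $s$ and sink $t$ (whose roles are enforced by pendant gadgets that can only be covered by a path ending near them), a chain of \emph{variable gadgets} strung between $s$ and $t$ — each a two-route ``diamond'' whose upper route encodes \texttt{true} and lower route \texttt{false} — together with \emph{clause gadgets} attached to the literal-routes, so that $G$ admits an isometric (shortest $s$--$t$) path of eccentricity at most a threshold $R$ if and only if $\varphi$ is satisfiable. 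The value $R$ is chosen so that covering a clause gadget within distance $R$ forces the chosen path to run through a route of a literal that satisfies that clause, and so that the two routes of each variable diamond have equal length, leaving the path free to pick either truth value.

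Next I would carry out the grid embedding. Exploiting the planarity of $\varphi$, I would lay the variable gadgets along one horizontal line of a grid and route the clause gadgets and all connecting edges through pairwise-disjoint grid corridors, replacing each abstract edge by a grid path, i.e.\ a subdivision, and using a standard orthogonal/planar drawing so that the whole construction fits in a grid of polynomial size. The delicate point — and the main obstacle — is that in a \emph{partial} grid the distance between two vertices need not equal their $\ell_1$-distance, so a careless embedding can introduce either shortcuts or forced detours that destroy the ``this route is shortest iff the literal is chosen'' invariant. I would handle this by careful length bookkeeping: pad every $s$--$t$ route with detour segments so that all routes keep exactly the intended common length; size the clause-to-literal connector paths so that a clause-gadget vertex sits at distance exactly (the scaled) $R$ from every satisfying literal-route and strictly more from the others; and make the corridors long enough and far enough apart that no two of them interact and that the unique way to travel from $s$ to $t$ is along one of the designed routes. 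One then re-verifies the equivalence — a vertex is within distance $R$ of a candidate isometric path in the grid instance exactly when the corresponding clause is satisfied — now with all distances measured in the partial grid; this step, certifying that the encoded routes remain shortest paths after subdivision and routing, is where essentially all of the technical effort lies.

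Finally, to pass from $k=1$ to an arbitrary integer $k$, I would take $k$ vertex-disjoint copies of the partial-grid instance produced above, placed in far-apart regions of one large grid; a disjoint union of grid subgraphs is again a partial grid (and, if a connected instance is preferred, consecutive copies can be joined by grid paths of length much larger than $R$, after which the same analysis applies). Any isometric path lies inside a single copy, so with $k$ paths one can devote one to each copy; hence the new graph is a \textsc{yes}-instance of \textsc{$k$-Geodesic Center} for threshold $R$ if and only if every copy — equivalently the single underlying \textsc{MESP} instance, and hence $\varphi$ — is satisfiable, the pigeonhole/separation argument ruling out a solution in which one path serves two copies. Since \textsc{$k$-Geodesic Center} is clearly in NP (guess $k$ paths, check in polynomial time that each is a shortest path between its endpoints and that every vertex is within distance $R$ of their union), this reduction, which is plainly computable in polynomial time, proves \Cref{thm:hard}.
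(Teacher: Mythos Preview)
Your high-level plan is sound and ends at the right theorem, but it diverges from the paper in two places, and in one of them your sketch has a genuine gap.

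For the grid realisation you propose to redo the embedding by hand, padding routes and spacing corridors so that shortest paths survive; you correctly flag this as ``where essentially all of the technical effort lies'' and then do not carry it out. The paper avoids this entirely: it keeps the Dragan--Leitert graph $B(I)$ as a planar graph of maximum degree at most four, invokes a known result (\Cref{lem:equal}, from~\cite{chakraborty2023}) that any such graph has an \emph{equal} $\ell$-subdivision which is a partial grid of polynomial size, and proves two short generic lemmas showing that an $m$-cover of $G$ becomes an $(m\ell+\lfloor\ell/2\rfloor)$-cover of $G_\ell$ and conversely. This black-box route is both shorter and robust, and it completely sidesteps the shortcut/detour issues you worry about.

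For the passage from $k=1$ to general $k$ you take $k$ disjoint copies and appeal to pigeonhole; the paper instead attaches to a single copy of $B(I)$ a tree gadget (a path of length $2k$ with $m'$-long pendants at the internal vertices) that forces exactly $k-1$ of the $k$ paths to be consumed inside the gadget, leaving one path that must solve the original MESP instance. Your disjoint-union version is correct as stated, since an isometric path of a disconnected graph lies in a single component and every component must receive a path, but note that it yields a disconnected instance. Your parenthetical fix --- joining the copies by grid paths ``of length much larger than $R$, after which the same analysis applies'' --- does not work: the interior vertices of each connector are then at distance greater than $R$ from every copy, so covering them eats into your budget of $k$ paths and the pigeonhole argument collapses. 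The paper's tree gadget keeps the instance connected without this difficulty.
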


\medskip\noindent\textbf{Related Works} To the best of our knowledge, the computational complexity of \textsc{$k$-Geodesic Center} for arbitrary $k$ has not been studied before. Therefore, we begin by surveying the relevant results on $1$-\textsc{Geodesic Center} i.e. the MESP problem. Dragan \& Leitart~\cite{dragan2017minimum} gave several constant factor approximation algorithms for MESP with varying running times on general graphs. In an another paper~\cite{JGAA-394}, the authors proposed polynomial time algorithms for MESP on graph classes like \emph{chordal} graphs and \emph{distance hereditary} graphs. In fact the authors proved that, MESP admits an $O(n^{\gamma+3})$-time algorithm on graphs with \emph{projection gap} at most $\gamma$, and $n$ vertices. The parameter \emph{projection gap} generalises the notion of $\delta$-hyperbolicity. Their result implies that MESP admits an $O(n^{4\delta+4})$-time algorithm on graphs with $\delta$-hyperbolicity at most $\delta$, and $n$ vertices. We do not know if MESP admits a fixed parameter algorithm with respect to $\delta$-hyperbolicity. The same authors also proposed additive approximation algorithms for graphs with bounded \emph{tree-length}. Fixed parameter tractability of MESP with respect to various graph parameters like modular width, distance to cluster
graph, maximum leaf
number, feedback edge set, etc. have been also studied recently~\cite{bhyravarapu2023parameterized,kuvcera2023minimum}. As noted by Ku{\v{c}}era and Such{\`y}~\cite{kuvcera2023minimum}, the fixed parameter tractablility of MESP with respect to \emph{tree-width} is an interesting open problem. (Tree-width measures how far a graph is from a tree from a structural point of view.) Relation of MESP with other problems like the \emph{minimum distortion embedding on a line}~\cite{dragan2017minimum} and \emph{$k$-laminar} problem~\cite{birmele2016minimum} have been established.

Dragan \& Leitart~\cite{JGAA-394} observed that MESP admits an additive $O(\delta \log n)$-approximation algorithm on graphs with $\delta$-hyperbolicity at most $\delta$ and $n$ vertices. Their proof uses the fact that the tree-length of $\delta$-hyperbolic graphs are at most $O(\delta \log n)$. As the best known bound for tree-length of $\delta$-hyperbolic graphs is  $O(\delta \log n)$, this method seems not directly provide constant error in case of hyperbolic graph. 
% because the best bound we have for the tree-length of $\delta$-hyperbolic graphs is $O(\delta \log n)$ Since there exists graphs whose tree-length is $\Omega(\delta \log n)$ \todo{ 7-systolic graphs (i.e. plane triangulation with inner vertices of degrees at least 7)are 1-hyperbolic and I think there is no tree-decomposition with bags of diameter $O(\log n)$ but I have no citation in mind. Maybe we should rephrase and say that the method cannot be directly used to treat  ?}, their technique cannot be directly generalised. 
Then, in the PhD Thesis of A.O. Mohammed~\cite{mohammed2019slimness}, an $O(\delta)$-approximation algorithm for MESP on $\delta$-hyperbolic graphs have been proposed. Other examples include fast additive $O(\delta)$-approximation algorithms for finding the diameter, radius, and all eccentricities~\cite{chepoi2008diameters,chepoi2019fast,ChepoiEstellon07} as well as packing and covering for families of quasiconvex sets \cite{chepoi2017core}. \Cref{thm:hyperbolic} adds \textsc{$k$-Geodesic Center} in the list of problems admitting an additive approximation algorithm depending only on the $\delta$-hyperbolicity of the input graph. Recently, the computational complexity of maximum independent set of planar $\delta$-hyperbolic graphs have been studied~\cite{kisfaludi2023separator}.

\medskip\noindent\textbf{Organisation:} In \Cref{sec:prelim} we introduce some terminologies. In \Cref{sec:pairing}, we introduce the notion of shallow pairing and prove its existence in $\delta$-hyperbolic graphs. In \Cref{sec:algorithm,sec:hard}, we prove \Cref{thm:hyperbolic,thm:hard}, respectively.

% Connection to facility location problems, practical motivation.
% The problem of deciging whether there exists a collection $\mathcal C$ of $k$ geodesics such that the maximum distance between any vertex $v$ and  $\mathcal C$ is at most $R$ is a common generalization of the minimum eccentricity shortest path problem (Dragan and Leteirt, 2017) which is obtain for $k=1$, and the isometric path cover problem (Fisher and Fitzpatrick, 2001) which is obtaine for $R=0$. For k=1, connection to minimum distortion embedding on lines problems. 
%-> check the motivation of the papers that introduced these two problems
%-> introduce the notion of pairing, its history and explain why it is useful for our problem

\newcommand{\geodtriangle}[3]{\Delta\left(#1,#2,#3\right)}
\newcommand{\isom}[2]{\sigma(#1,#2)}
\newcommand{\grom}[3]{\left(#1|#2\right)_{#3}}
\newcommand{\ballp}[2]{B_{#1}\left({#2}\right)}
\newcommand{\opt}[1]{R^*_{#1}}

\section{Preliminaries}\label{sec:prelim}

\noindent \textbf{Basic notations:} For two vertices $u,v\in V(G)$, $\isom{u}{v}$ shall denote an $(u,v)$-isometric path in $G$ and the length (i.e. the number of edges) in $\isom{u}{v}$ is denoted as $\dist{u}{v}$, the \emph{distance} between $u$ and $v$. If an isometric path $P$ of $G$ has a vertex $r$ as end vertex, then $P$ is called an \emph{$r$-path}.

For two sets $S_1,S_2 \subseteq V(G)$ of vertices, $\dist{S_1}{S_2} = \min\{ \dist{u}{v} \colon u\in S_1, v\in S_2\}$ is the distance between $S_1$ and $S_2$. For convenience, if one subset of vertices is a singleton, we abbreviate $d(\{v\},S)$ by $d(v,S).$ 
For an integer $k$ and a set of vertices $S$, the \emph{$k$-neighborhood (or $k$-ball) around $S$}, denoted as $\ballp{k}{S}$, is the set of all vertices $v$ such that $\dist{v}{S}\leq k$. For an integer $R$, a collection $\mathcal{C}$ of isometric paths of $G$ is an \emph{$R$-cover} of $G$ if $$\displaystyle \bigcup\limits_{P\in \mathcal{C}} \ballp{R}{V(P)} = V(G)$$  For an integer $k$, the symbol $\opt{k}$ shall denote the minimum integer for which there is a $\opt{k}$-cover $\mathcal{C}$ of $G$ with $|\mathcal{C}|=k$. If every path in $\mathcal{C}$ is an $r$-path, then $\mathcal{C}$ is an \emph{$(r,R)$-cover} of $G$.  
For an integer $R$, a vertex $r$ and a subset $S\subseteq V(G)$ is a \emph{$(r,R)$-packing} if the $R$-neighborhood of any $r$-path $P$ contains at most one vertex of $S,$ i.e. $|\ballp{R}{V(P)} \cap S|\le 1$. Note that if $S$ is an $(r,R)$-packing of $G$ then any $(r,R)$-cover of $G$ has size at least $|S|.$ Indeed, the $R$-neighborhood of any $r$-path covers at most one vertex in $S.$ Hence, it is not possible to cover $S$ with less than $|S|$ $r$-paths.

% \subsection{Notations related to $\delta$-hyperbolicity} 
%\subsection{Hyperbolicity}\label{sec:hyperbolicity}
\medskip \noindent \textbf{Definitions related to $\delta$-hyperbolicity}:
Let $(X,d)$ be a metric space. A geodesic segment joining two points $x$ and $y$ from $X$ is a map $\rho$ from the segments $\isom{a}{b}$ of length $|a-b| = \dist{x}{y}$ to $X$ such that $\rho(a) = x,$ $\rho(b)=y$, and $\dist{\rho(s)}{\rho(t)}=|s-t|$ for all $s,t\in \isom{a}{b}.$ A metric space $(X,d)$ is geodesic if every pair of points in $X$ can be joined by a geodesic. We will denote by $\isom{x}{y}$ any geodesic segment connecting the points $x$ and $y.$ 

Introduced by Gromov \cite{gromov1987}, $\delta$-hyperbolicity measures how treelike a graph is from a metric point of view. Recall that a metric space $(X,d)$ embeds into a tree network (with positive real edge lengths), if and only if for any four points $u,v,w,x$ the two larger of the distance sums $\dist{u}{v}+\dist{w}{x},$ $\dist{u}{w}+\dist{v}{x},$ and $\dist{u}{x}+\dist{v}{w}$ are equal. A metric space $(X,d)$ is called $\delta$-hyperbolic if the two larger distance sums differ by at most $2\delta.$ 
For a metric space $(X,d),$ the \emph{Gromov product} of two points $x,y$ with respect to a third point $z$ is defined as $$\grom{x}{y}{z} = \frac{1}{2}\left( \dist{x}{z} + \dist{z}{y} - \dist{x}{y} \right)$$ 
Equivalently, a metric space $(X,d)$ is $\delta$-hyperbolic if for any four points $u,v,w,x,$ $$\grom{u}{w}{x} \ge \min\left\{\grom{u}{v}{x},\grom{v}{w}{x}\right\}-\delta.$$
A connected graph $G=(V,E)$ equipped with standard graph metric $d_G$ is $\delta$-hyperbolic if $(V,d_G)$ is a $\delta$-hyperbolic metric space. The $\delta$-hyperbolicity $\delta(G)$ of a graph $G$ is the smallest $\delta$ such that $G$ is $\delta$-hyperbolic.

There exists several equivalent definitions of $\delta$-hyperbolic metric spaces involving different but comparable values of $\delta.$ In the proof of \Cref{thm:hyperbolic}, we will use the definition employing $\delta$-thin geodesic triangles. A {\em geodesic triangle} $\geodtriangle{x}{y}{z}$ is a union $\isom{x}{y}\cup \isom{x}{z}\cup \isom{y}{z}$ of three geodesic segments connecting these vertices. Let $m_x$ be the point of the geodesic $\isom{y}{z}$ located at distance $\alpha_y:=\grom{x}{z}{y}$ from $y.$ Then $m_x$ is located at distance $\alpha_z:=\grom{x}{y}{z}$ from $z$ because $\alpha_y+\alpha_z=\dist{y}{z}.$ Analogously, define the points $m_y\in \isom{x}{z}$ and $m_z\in \isom{x}{y}$ both located at distance $\alpha_x:=\grom{y}{z}{x}$ from $x$; see \Cref{fig:thin-triangle}. 

\begin{figure}
    \centering
    \input{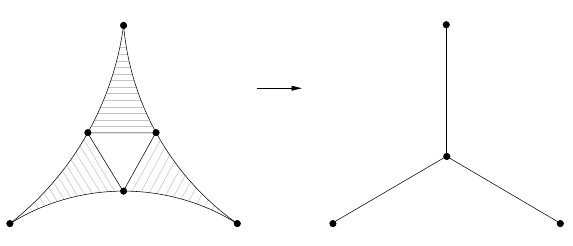_t}
    \caption{A $\delta$-thin geodesic triangle}
    \label{fig:thin-triangle}
\end{figure}

There exists a unique isometry $\varphi$ which maps the geodesic triangle $\geodtriangle{x}{y}{z}$ to a star $T(x,y,z)$ consisting of three solid segments $\isom{x'}{m'},$ $\isom{y'}{m'},$ and $\isom{z'}{m'}$ of length $\alpha_x,$ $\alpha_y,$ and $\alpha_z,$ respectively. This isometry maps the vertices $x,y,z$ of $\geodtriangle{x}{y}{z}$ to the respective leaves $x',y',z'$ of $T(x',y',z')$ and the points $m_x,$ $m_y,$ and $m_z$ to the center $m$ of this tripod. Any other point of $T(x',y',z')$ is the image of exactly two points of $\geodtriangle{x}{y}{z}.$ A geodesic triangle $\geodtriangle{x}{y}{z}$ is called $\delta$-thin \cite{AlBrCoFeLuMiShSh} if for all points $u,v \in \geodtriangle{x}{y}{z},$ $\varphi(u)=\varphi(v)$ implies $\dist{u}{v}\le \delta.$ 

The following result shows that the $\delta$-hyperbolicity of geodesic space can be approximated by the maximum thinness of its geodesic triangles. 

\begin{proposition}[\cite{AlBrCoFeLuMiShSh,BrHa,gromov1987}]\label{prop:hyp-thin}
    Geodesic triangles of geodesic $\delta$-hyperbolic space are $4\delta$-thin. Conversely, geodesic space with $\delta$-thin triangles are $\delta$-hyperbolic.
\end{proposition}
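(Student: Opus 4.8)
The proposition has two halves, which I would prove separately; both amount to playing the four-point (equivalently, Gromov-product) inequality of $\delta$-hyperbolicity against the tripod map $\varphi$, using only that every pair of points is joined by a geodesic and that one can select intermediate points at prescribed distances along geodesics.

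For the implication hyperbolic $\Rightarrow$ $4\delta$-thin triangles, I would fix a geodesic triangle $\geodtriangle{x}{y}{z}$ and points $u\neq v$ with $\varphi(u)=\varphi(v)$. The first step is a short case analysis: $\varphi$ restricts to an isometry on each of the two sub-arcs into which $m_z$ divides the side $\isom{x}{y}$, and their images meet only at the centre of the tripod, so $\varphi$ is injective on each side; hence $u$ and $v$ lie on two distinct sides, which share a unique vertex --- say $z$, so $u\in\isom{z}{y}$ and $v\in\isom{z}{x}$ --- and the common image $\varphi(u)=\varphi(v)$ must lie on the $z$-leg, of length $\alpha_z=\grom{x}{y}{z}$, so that $\dist{z}{u}=\dist{z}{v}=:t$ with $t\le\alpha_z$. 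The second step is the estimate: since $u$ lies on the geodesic $\isom{z}{y}$ at distance $t$ from $z$ one computes $\grom{u}{y}{z}=t$, likewise $\grom{v}{x}{z}=t$, while $\grom{x}{y}{z}=\alpha_z\ge t$. Applying the four-point inequality based at $z$ first through $y$ gives $\grom{u}{x}{z}\ge\min\{t,\alpha_z\}-\delta=t-\delta$, and then through $x$ gives $\grom{u}{v}{z}\ge\min\{t-\delta,t\}-\delta=t-2\delta$; since $\dist{u}{v}=\dist{z}{u}+\dist{z}{v}-2\grom{u}{v}{z}=2t-2\grom{u}{v}{z}$, we conclude $\dist{u}{v}\le 4\delta$.

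For the converse ($\delta$-thin triangles $\Rightarrow$ $\delta$-hyperbolic), I would establish the Gromov-product inequality $\grom{x}{z}{w}\ge\min\{\grom{x}{y}{w},\grom{y}{z}{w}\}-\delta$ directly. After swapping $x$ and $z$ if needed, put $p:=\grom{x}{y}{w}\le\grom{y}{z}{w}$; then $p\le\dist{w}{x}$ and $p\le\dist{w}{y}$ (a Gromov product is at most each of the two distances), and $p\le\dist{w}{z}$ (using $p\le\grom{y}{z}{w}$), so I can pick $a\in\isom{w}{x}$, $b\in\isom{w}{y}$ and $b'\in\isom{w}{z}$ each at distance $p$ from $w$. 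In $\geodtriangle{w}{x}{y}$ the $w$-leg has length exactly $\grom{x}{y}{w}=p$, so $a$ and $b$ are the two internal points on the sides through $w$ and $\varphi(a)=\varphi(b)$, giving $\dist{a}{b}\le\delta$; in $\geodtriangle{w}{y}{z}$ the $w$-leg has length $\grom{y}{z}{w}\ge p$, so $b$ and $b'$ have the same image and $\dist{b}{b'}\le\delta$, whence $\dist{a}{b'}\le2\delta$. As $\dist{a}{x}=\dist{w}{x}-p$ and $\dist{b'}{z}=\dist{w}{z}-p$, the triangle inequality along $x,a,b',z$ yields $\dist{x}{z}\le\dist{w}{x}+\dist{w}{z}-2p+2\delta$, i.e.\ $2\grom{x}{z}{w}=\dist{w}{x}+\dist{w}{z}-\dist{x}{z}\ge2p-2\delta$, which is the desired inequality.

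I expect the main difficulty to be bookkeeping rather than any hard estimate: in the first half one must carefully argue that collapsed points sit on sides through a common vertex and that their image lies on the corresponding leg, so that the two preimages are equidistant from that vertex and within the relevant Gromov product; in the second half one must check that the auxiliary points exist and that $b,b'$ land on the $w$-leg. I would also note that the asymmetry in constants ($4\delta$ one way, $\delta$ the other) is genuine and only records how many times the four-point inequality is invoked, and that throughout I use the elementary equivalence between the ``two larger distance sums differ by at most $2\delta$'' form and the Gromov-product form of $\delta$-hyperbolicity, already recorded in the preliminaries.
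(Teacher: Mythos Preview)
The paper does not actually prove \Cref{prop:hyp-thin}; it is stated as a classical fact with citations to \cite{AlBrCoFeLuMiShSh,BrHa,gromov1987} and used as a black box. Your proposal supplies a full proof where the paper gives none, so there is nothing to compare against on the paper's side.

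That said, your argument is correct and is essentially the standard one found in the cited references. Both directions are handled cleanly: in the first half the two applications of the Gromov-product inequality through the intermediate vertices $y$ and then $x$ yield $(u|v)_z\ge t-2\delta$ and hence $d(u,v)\le 4\delta$; in the second half the choice of $a,b,b'$ at distance $p=\min\{(x|y)_w,(y|z)_w\}$ from $w$ and the thinness of the two triangles $\Delta(w,x,y)$ and $\Delta(w,y,z)$ give $d(a,b')\le 2\delta$, from which the four-point inequality follows. The only minor remark is that in the converse you should make explicit that the same geodesic $\isom{w}{y}$ is used in both triangles (so that the point $b$ is literally the same in each), which you implicitly do; otherwise the bookkeeping you flag as the main difficulty is handled correctly.
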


Every graph $G=(V,E)$ equipped with its standard distance $d_G$ can be transformed into a geodesic space $(X,d)$ by replacing every edge $e=(u,v)$ by a solid segment $[u,v]$ of length $1.$ These segments may intersect only at their common ends. Then $(V,d_G)$ isometrically embeds naturally in $(X,d).$ 
The \emph{thinness} $\tau(G)$ of a graph $G$ is the smallest integer $\delta$, such that all geodesic triangles of the geodesic space arising from $G$ are $\delta$-thin. When thinness of a graph is $\delta$, then it is also called a \emph{$\delta$-thin} graph.

\medskip \noindent \textbf{Subdivisions and partial grids}: For a graph $G$, its \emph{$\ell$-subdivision}, denoted as $G_{\ell}$, is obtained by replacement of all its edges by paths of a fixed length $\ell\geq 1$. An \emph{equal subdivision} of $G$ is an $\ell$-subdivision for some $\ell\geq 1$. The vertices of $G$ in $G_{\ell}$ are the \emph{original vertices}. We shall use the following result.

\begin{lemma}[\cite{chakraborty2023}]\label{lem:equal}
    Let $G$ be a planar graph with maximum degree $4$. Then there exists a partial grid graph $H$, which is an equal subdivision of $G$ and contains at most $O(|V(G)|^3)$ vertices.
\end{lemma}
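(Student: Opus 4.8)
The plan is to combine a planar straight-line embedding on a polynomial-size integer grid with a local ``rerouting'' of edges along grid lines, and then to equalise the resulting path lengths by one final uniform subdivision. First I would invoke a classical grid-drawing result (e.g.\ the Fraysseix--Pach--Pollack or Schnyder theorem) to embed $G$ on an $O(|V(G)|)\times O(|V(G)|)$ integer grid with vertices at grid points and edges drawn as non-crossing straight segments. To turn this into a \emph{subgraph} of a grid, I would refine the grid by a constant factor and replace each straight edge by an $L$-shaped (or staircase) path that follows grid edges, using the planarity of the drawing and the bounded degree $4$ to route the (at most four) edges incident to each vertex into the four distinct grid directions without collisions; after a further constant-factor refinement these routes can be made internally vertex-disjoint. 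At this point $G$ is realised as a partial grid $H'$, i.e.\ a subgraph of an $N\times N$ grid with $N=O(|V(G)|)$, but the paths representing distinct edges of $G$ may have different lengths, so $H'$ is a subdivision of $G$ but not necessarily an \emph{equal} one.

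The second step is to make the subdivision uniform. Each edge $e$ of $G$ is now a path of some length $\ell_e$ in $H'$, with every $\ell_e$ bounded by the perimeter $O(N)=O(|V(G)|)$ of the grid. Let $L$ be a common value that is a multiple of every $\ell_e$ — for instance $L=\operatorname{lcm}\{\ell_e\}$, or more simply $L=\big(\max_e \ell_e\big)!$, which is far from optimal but suffices for a polynomial bound if one is careful; a cleaner choice is to first pad every route to the \emph{same} length $\ell=\Theta(N)$ by inserting short detours (``bumps'') on the grid, which is possible after one more constant-factor refinement, and then subdivide each grid edge a uniform number of times so that every edge of $G$ becomes a path of the same length $\ell'$. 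Subdividing a grid graph uniformly keeps it a (finer) grid graph, so the final graph $H$ is still a partial grid, it is by construction the $\ell'$-subdivision of $G$, and the total number of vertices is $O(N^2)\cdot O(\ell')=O(|V(G)|^2)\cdot O(|V(G)|)=O(|V(G)|^3)$, as required.

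I expect the main obstacle to be the \emph{simultaneous} rerouting of all edges so that the representing paths are internally vertex-disjoint while keeping every length $\Theta(N)$ and the ambient grid only $O(|V(G)|)\times O(|V(G)|)$: one must argue that a constant number of refinement steps is enough to separate the routes near each vertex and along each original segment, exploiting that the straight-line drawing already has pairwise-disjoint edge interiors and that degrees are at most $4$ (so the four ports at a vertex occupy the four axis directions of a small grid box around it). The length-equalisation step is then a comparatively routine bookkeeping: one picks a single target length, pads the shorter routes with local detours, and performs one global uniform subdivision, all of which only multiply the size by constants and by the target length $O(|V(G)|)$.
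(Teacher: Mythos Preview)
The paper does not prove this lemma; it is quoted from \cite{chakraborty2023} (note the citation in the lemma header) and used as a black box in the NP-hardness reduction of \Cref{sec:hard}. There is therefore no proof in the present paper to compare your proposal against.

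As for the approach itself: the overall architecture---realise $G$ as a subgraph of an $O(n)\times O(n)$ grid (hence as a subdivision of $G$ with all edge-paths of length $O(n)$), then equalise the lengths by local padding and one uniform refinement---is the standard route and does give the $O(n^3)$ bound. The one step I would not attempt as you wrote it is the conversion of a Fraysseix--Pach--Pollack or Schnyder straight-line drawing into an orthogonal one by ``replacing each straight edge by an $L$-shaped path after a constant-factor refinement''. In those drawings two non-adjacent straight segments can lie arbitrarily close to each other (their minimum distance is not bounded below by any constant multiple of the grid spacing), so a constant-factor refinement is in general \emph{not} sufficient to route axis-aligned replacements for all edges without introducing crossings. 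It is far cleaner to invoke directly a classical orthogonal grid-drawing result for planar graphs of maximum degree $4$ (Valiant 1981, or the linear-area orthogonal algorithms of Tamassia--Tollis or Biedl--Kant), which already delivers $G$ as a partial grid on an $O(n)\times O(n)$ board with every edge a grid path of length $O(n)$. Your second phase---pad each edge-path to a common target length $\Theta(n)$ using short detours after one more constant refinement, then perform a single uniform $\ell$-subdivision of the ambient grid---is fine; the only bookkeeping point is parity, since a bump changes a path length by an even amount, so one first arranges all lengths to share a parity and then pads.
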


\section{Pairings and shallow pairings} \label{sec:pairing} In this section we first recall the definition of {\em the pairing property} following the terminology of \cite{bénéteau2022abctgraphs}. Then, we introduce a coarse version of this property, called {\em shallow pairing property}\footnote{The notion of approximate (shallow) pairing in hyperbolic graphs was
defined by Victor Chepoi, who also asked the question about their
existence.}, and show that this relaxed property holds for $\delta$-hyperbolic graphs. 

Given a connected graph $G$, a {\it profile} is any finite sequence $\pi = (x_1, \ldots, x_n)$ of
vertices of $G$.  The \emph{total distance} of a vertex $v$ of $G$ is defined by $F_{\pi}(v)=\sum_{i=1}^n d(v,x_i).$ A \emph{pairing} $P$ is a partition of an even profile $\pi$ of length $k=2n,$ into $n$ disjoint pairs. For a pairing $P$, define $D_{\pi}(P)=\sum_{\{ a,b\}\in P} d(a,b)$.  
%A pairing $P$ of $\pi$ maximizing the function $D_{\pi}$ is called a \emph{maximum pairing}.  
The notion of pairing was defined by Gerstel and Zaks~\cite{GeZa}; they also proved the following weak duality between the functions $F_{\pi}$ and $D_{\pi}$:

\begin{lemma}[\cite{GeZa}]\label{pairing1}
  For any even profile $\pi$ of length $k=2n$ of a connected graph $G$, for any pairing
  $P$ of $\pi$ and any vertex $v$ of $G$, $D_{\pi}(P)\le F_{\pi}(v)$
  and the equality holds if and only if
  $v \in \bigcap_{\{a,b\} \in P} I(a,b)$.
\end{lemma}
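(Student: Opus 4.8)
The plan is to derive both the inequality and its equality characterisation directly from the triangle inequality, with the only real point of care being the combinatorial bookkeeping that a pairing is a \emph{partition} of the profile.

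First I would fix a pairing $P$ of the even profile $\pi=(x_1,\dots,x_k)$, $k=2n$, and an arbitrary vertex $v$ of $G$. For every pair $\{a,b\}\in P$ the triangle inequality in the graph metric gives $d(a,b)\le d(a,v)+d(v,b)$. Summing these $n$ inequalities over all pairs of $P$ yields
\[
  D_{\pi}(P)=\sum_{\{a,b\}\in P} d(a,b)\ \le\ \sum_{\{a,b\}\in P}\bigl(d(a,v)+d(v,b)\bigr).
\]
Because $P$ is a partition of the profile into pairs, each vertex $x_i$ occurs in exactly one pair, so on the right-hand side every term $d(v,x_i)$ appears exactly once; hence the right-hand side equals $\sum_{i=1}^{k} d(v,x_i)=F_{\pi}(v)$. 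This proves $D_{\pi}(P)\le F_{\pi}(v)$.

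Next I would treat the equality case. Equality holds in the displayed chain precisely when it holds in each of the $n$ summands, i.e.\ when $d(a,b)=d(a,v)+d(v,b)$ for every $\{a,b\}\in P$. Recalling that the interval $I(a,b)$ is by definition the set of vertices $w$ with $d(a,w)+d(w,b)=d(a,b)$ (the vertices lying on a shortest $(a,b)$-path), this condition says exactly that $v\in I(a,b)$ for every pair of $P$, i.e.\ $v\in\bigcap_{\{a,b\}\in P} I(a,b)$. Conversely, if $v$ lies in this intersection then each summand is an equality and the whole inequality is tight, so the characterisation is an equivalence.

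I do not expect any genuine obstacle here: the proof is a single application of the triangle inequality combined with the definition of a graph interval. The only step that needs attention is the first one — verifying that summing $d(a,v)+d(v,b)$ over the pairs of $P$ reconstitutes $F_{\pi}(v)$ with each term counted exactly once, which is precisely where the hypothesis that $P$ partitions $\pi$ (as opposed to being an arbitrary family of pairs drawn from $\pi$) is used.
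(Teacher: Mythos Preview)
Your proof is correct. The paper does not give its own proof of this lemma---it merely cites it from Gerstel and Zaks---so there is nothing to compare against, but your argument (triangle inequality applied pairwise, summed using that $P$ partitions $\pi$, with equality characterised via the definition of $I(a,b)$) is the standard one and is complete.
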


We say that a graph $G$ satisfies the \emph{pairing property} if for any even profile $\pi$ there exists a pairing $P$ of $\pi$ and a vertex $v$ of $G$ such that $D_{\pi}(P)=F_{\pi}(v)$, i.e., the
functions $F_{\pi}$ and $D_{\pi}$ satisfy the strong duality. Such a pairing is called a \emph{perfect pairing}. By \Cref{pairing1}, the pairing property of~\cite{GeZa} coincides with the \emph{intersecting-intervals property} of~\cite{McMoMuNoPo}. It was shown
in~\cite{GeZa} that trees satisfy the pairing property. More generally, it was shown in~\cite{McMoMuRo} and independently
in~\cite{ChFeGoVa} that cube-free median graphs also satisfy the
pairing property. It was proven in~\cite{McMoMuNoPo} that the complete bipartite graph $K_{2,n}$ satisfies the pairing property. As observed
in~\cite{McMoMuNoPo}, the $3$-cube is a simple example of a graph not satisfying the pairing property. 
%The investigation of the structure of graphs with the pairing property was formulated as an open problem in~\cite{GeZa}.

In general, $\delta$-hyperbolic graphs do not satisfy the pairing property but, as shown below, they satisfy some coarse variant of the pairing property. Before defining this variant, let us first define the notion of {\it $\gamma$-shallow pairing}
\begin{definition}[$\gamma$-shallow pairing]\label{def:pairing}
    Let $G$ be a graph and $\pi$ an even profile of length  $2k$. A \emph{$\gamma$-shallow pairing} of $\pi$ is a pairing $P$ such that, there exists a vertex $v$ with $\grom{x}{y}{v}\le \gamma$ for every $\{x,y\}\in P.$
%is a partition ${\cal S}$ of $X$ into $k$ disjoint pairs such that there exists a vertex $m\in V(G)$ with $(x|y)_m \leq 2\delta+\frac{1}{2}$ for every pair $\{x,y\} \in {\cal P}$. 
%     \begin{enumerate*}[label=(\textbf{\Alph*})]
%         \item end vertices of all paths in $\mathcal{C}$ lie in $X$, and no two paths in $\mathcal{C}$ have a common end vertex;
%         \item there exists a vertex $m\in V(G)$ such that $(x|y)_m \leq 2\delta+1$ for the end-vertices $x$ and $y$ of any path $P\in \mathcal{C}.$
%    \end{enumerate*}
\end{definition}
 In the definition of a perfect pairing $P$, the vertex $v$ belongs to every interval between a pair of vertices in $P.$ As the following lemma shows, in the definition of a $\gamma$-shallow pairing $P_\gamma$, the vertex $v$ is at distance at most $\gamma+\tau(G)$ from every isometric path between a pair of vertices in $P_\gamma.$
 \begin{lemma}
 Let $G$ be a graph, $\pi$ be an even profile of length $2k$ and $P$ be a $\gamma$-shallow pairing of $\pi$. Then there exists a vertex $v$ such that $d(v,\isom{x}{y})\le \gamma+\tau(G)$ for every geodesic $\isom{x}{y}$ with $\{x,y\}\in P.$ 
 \end{lemma}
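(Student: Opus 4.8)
The plan is to reduce the statement to (one direction of) the classical principle that, in a $\tau$-thin geodesic space, the Gromov product $\grom{x}{y}{v}$ measures the distance from $v$ to a geodesic joining $x$ and $y$ up to an additive error $\tau$. Since $P$ is a $\gamma$-shallow pairing of $\pi$, \Cref{def:pairing} supplies one vertex $v$ with $\grom{x}{y}{v}\le\gamma$ for every pair $\{x,y\}\in P$, and we claim this same vertex $v$ witnesses the lemma. So it suffices to establish the following local fact: for any three vertices $x,y,v$ of $G$ with $\grom{x}{y}{v}\le\gamma$ and any isometric $(x,y)$-path $\isom{x}{y}$, we have $d(v,\isom{x}{y})\le\gamma+\tau(G)$. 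Only the upper bound on $d(v,\isom{x}{y})$ is needed, which keeps the argument short.

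To prove this local fact, I would form the geodesic triangle $\geodtriangle{x}{y}{v}$ in the geodesic space arising from $G$, using the given isometric path $\isom{x}{y}$ as the side from $x$ to $y$ and arbitrary shortest paths as the sides $\isom{x}{v}$ and $\isom{y}{v}$. Apply to this triangle the tripod construction recalled in \Cref{sec:prelim} (with $z:=v$): let $m_x\in\isom{y}{v}$, $m_y\in\isom{x}{v}$ and $m_v\in\isom{x}{y}$ be the three points that the canonical map $\varphi$ onto the tripod sends to its centre. Two features of this construction are all we use: $m_v$ lies on the path $\isom{x}{y}$, and $d(v,m_y)=\grom{x}{y}{v}$ (the point $m_y$ sits at distance $\grom{x}{y}{v}$ from $v$ along $\isom{x}{v}$).

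Now invoke the definition of $\tau(G)$: the triangle $\geodtriangle{x}{y}{v}$ is $\tau(G)$-thin, so $\varphi(m_y)=\varphi(m_v)$ forces $d(m_y,m_v)\le\tau(G)$. With the triangle inequality this yields
\[
d\bigl(v,\isom{x}{y}\bigr)\ \le\ d(v,m_v)\ \le\ d(v,m_y)+d(m_y,m_v)\ \le\ \grom{x}{y}{v}+\tau(G)\ \le\ \gamma+\tau(G).
\]
The only technical nuisance is that $m_v$, being a point of the geodesic space, need not be a vertex of the path $\isom{x}{y}$: it may be the midpoint of an edge, in which case one passes to one of the two endpoints of that edge at an extra cost of $\tfrac12$. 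This is absorbed by a parity remark: $m_v$ fails to be a vertex exactly when $\grom{x}{y}{v}$ is not an integer, since $d(x,m_v)-\grom{x}{y}{v}=d(x,y)-d(y,v)\in\mathbb Z$, and then $\grom{x}{y}{v}\le\gamma-\tfrac12$, so the extra $\tfrac12$ does no harm.

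I do not expect a real obstacle here: once the thin-triangle toolkit of \Cref{sec:prelim} is in place the estimate is essentially one line. The points calling for care are the bookkeeping of which fork point lies on which side of $\geodtriangle{x}{y}{v}$ — so that $d(v,m_y)$ is correctly read off as $\grom{x}{y}{v}$ and not as one of the other two Gromov products — and the minor rounding of $m_v$ to a genuine vertex of the path.
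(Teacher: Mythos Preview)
Your proposal is correct and follows essentially the same route as the paper: take the vertex $v$ from \Cref{def:pairing}, form the geodesic triangle $\geodtriangle{x}{y}{v}$, and use $\tau(G)$-thinness to bound $d(v,m_v)\le d(v,m_y)+d(m_y,m_v)\le\grom{x}{y}{v}+\tau(G)$. The paper's $x',z'$ are your $m_y,m_v$. Your added remark about rounding $m_v$ to a vertex is a point the paper simply does not raise.
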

 \begin{proof}
 By definition of a $\gamma$-shallow pairing, there exists a vertex $v$ such that $\grom{x}{y}{v}\le \gamma$ for every $\{x,y\}\in P.$ For any pair of vertices $\{x,y\}\in P,$ consider the geodesic triangle $\Delta(x,y,v)$ and let $\isom{v}{x},\isom{v}{y},$ and $Q$ be the sides of this triangle. Let $x',y'$ be the points of $\isom{v}{x}$ and $\isom{v}{y}$, respectively,  located at distance $\grom{x}{y}{v}$ from $v$. Since the thinness of $G$ is $\tau(G),$ $\dist{x'}{y'}\le \tau(G)$, moreover $\dist{x'}{z'}\le \tau(G)$ and $\dist{y'}{z'}\le \tau(G)$, where $z'$ is the point of $Q$ at distance $\grom{y}{v}{x}$ from $v$ and at distance $\grom{x}{v}{y}$ from $y$. Since, $\grom{x}{y}{v}\le \gamma$, we conclude that $d(v,Q)\le d(v,z')\le d(v,x')+d(x',z')\le \gamma+\tau(G)$. %This establishes \Cref{lem:algo-shallow-pairing}.
 \end{proof}
 We say that a graph $G$ satisfies the \emph{$\gamma$-shallow pairing property} if for any even profile $\pi$ there exists a $\gamma$-shallow pairing $P$ of $\pi$.

The existence and the computation in polynomial time of a $(2\delta+\frac{1}{2})$-shallow pairing in a $\delta$-thin graph $G$ can be obtained using the concept of {\em fiber} that was introduced in \cite{chepoi2017core}.
For a vertex $u\in V(G)$ and a profile $\pi$, the {\it fiber} of $x$ with respect to a vertex $u$ is the set of vertices $$F_{u}(x)=\{ y\in \pi: (x|y)_{u}\ge 2\tau(G)+1\}.$$ From Claim 1 and 2 of \cite{chepoi2017core}, the following lemma holds.
\begin{lemma}\label{lem:fiber-size}
    For any graph $G$ and any even profile $\pi$ of length $2k,$ there is a vertex $v\in V(G)$ such that $|F_{v}(x)|\le k$ for any vertex $x\in \pi.$ 
\end{lemma}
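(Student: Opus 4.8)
The plan is to take for $v$ a \emph{median} of the profile, i.e.\ a vertex minimising the total distance $F_{\pi}(v)=\sum_{x\in\pi}\dist{v}{x}$; such a $v$ can be found in polynomial time by running a BFS from each vertex appearing in $\pi$ and adding up the distances. The only property of a median we use is this: for any neighbour $v'$ of $v$, the vertices of $\pi$ strictly closer to $v'$ and those strictly closer to $v$ form disjoint subsets of $\pi$, and minimality of $F_{\pi}$ forces the first set to be no larger than the second; hence at most $k$ vertices of $\pi$ are strictly closer to $v'$ than to $v$.

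Assume for a contradiction that $|F_{v}(x)|\ge k+1$ for some $x\in\pi$, and set $B=F_{v}(x)$. Every $y\in B$ has $\grom{x}{y}{v}\ge 2\tau(G)+1$, hence $\dist{v}{y}\ge 2\tau(G)+1$; combining the hyperbolicity inequality with $\delta(G)\le\tau(G)$ (\Cref{prop:hyp-thin}) further gives $\grom{y}{y'}{v}\ge \tau(G)+1$ for all $y,y'\in B$, so the whole of $B$ points ``in one direction'' from $v$. The heart of the argument is to convert this into a single neighbour $v'$ of $v$ that is strictly closer to more than $k$ vertices of $\pi$ — which contradicts $v$ being a median and is all we need. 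This is exactly what Claims~1 and~2 of~\cite{chepoi2017core} yield: working along a geodesic $\isom{v}{x}$ and exploiting $\tau(G)$-thinness of the triangles $\geodtriangle{v}{x}{y}$ for $y\in B$, one moves $v$ a bounded ($O(\tau(G))$) distance into the branch $B$, argues that this strictly decreases the distance to each vertex of $B$ while increasing the distance to each of the at most $k-1$ vertices of $\pi\setminus B$ by no more, and transfers the outcome back to a neighbour of $v$. We conclude $|F_{v}(x)|\le k$ for every $x\in\pi$.

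The step I expect to be delicate is precisely this conversion. A single step of $v$ toward $x$ need not decrease the distance to a vertex $y\in B$, since $\tau(G)$-thinness only forces the geodesics $\isom{v}{x}$ and $\isom{v}{y}$ to stay within $\tau(G)$ of one another and not to share an edge; on the other hand, moving $v$ several steps toward $x$ can drag it away from $\pi\setminus B$ by an amount comparable to what it gains on $B$, so the two effects have to be weighed against each other. The threshold $2\tau(G)+1$ in the definition of the fiber is exactly what leaves enough room to make this balance come out in our favour, and carrying out the estimate is the content of the two claims of~\cite{chepoi2017core}; \Cref{lem:fiber-size} follows by combining them.
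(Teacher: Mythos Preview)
Your proposal and the paper take the same route: the paper's entire proof is the sentence ``From Claim~1 and~2 of~\cite{chepoi2017core}, the following lemma holds,'' and you ultimately defer to those same two claims for the delicate step. Your write-up adds a useful sketch of the strategy behind those claims (choose $v$ to be a median of $\pi$ and derive a contradiction from a large fiber), but the technical content is outsourced to the same reference in both cases.
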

Lemma \ref{lem:fiber-size} is useful to prove the following result:
\begin{proposition}\label{lem:algo-shallow-pairing}
    Any graph $G$ satisfies the \emph{$(2\tau(G)+\frac{1}{2})$-shallow pairing property}. Moreover, for any graph $G$ with $n$ vertices and $m$ edges, and any even profile $\pi$ of length $2k,$ a $(2\tau(G)+\frac{1}{2})$-shallow pairing of $\pi$ can be computed in $O(mn^2)$ time. 
\end{proposition}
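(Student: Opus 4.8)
The plan is to turn the vertex $v$ supplied by \Cref{lem:fiber-size} into a shallow pairing via a perfect‑matching argument. Fix $v$ with $|F_v(x)|\le k$ for every $x\in\pi$ -- following \cite{chepoi2017core} one can take $v$ to minimise the total distance $F_\pi$, which is computable in $O(mn)$ time -- and regard $\pi$ as $2k$ positions $1,\dots,2k$ carrying vertices $x_1,\dots,x_{2k}$. Let $H$ be the \emph{conflict graph} on these positions, with $ij\in E(H)$ iff $\grom{x_i}{x_j}{v}\ge 2\tau(G)+1$. As all distances are integers, $\grom{x_i}{x_j}{v}\in\tfrac12\mathbb{Z}$, so $\grom{x_i}{x_j}{v}\le 2\tau(G)+\tfrac12$ if and only if $ij\notin E(H)$; hence a pairing of $\pi$ is a $(2\tau(G)+\tfrac12)$-shallow pairing witnessed by $v$ exactly when, viewed as $k$ disjoint position-pairs, it is a perfect matching of $\overline H$. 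Since $\overline H$ has $2k$ vertices, it therefore suffices to prove that $\overline H$ has a perfect matching.

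Next I would record the relevant structure of $H$. Split the positions into $L=\{i:\dist{x_i}{v}\le 2\tau(G)\}$ and $U=\{i:\dist{x_i}{v}\ge 2\tau(G)+1\}$. Then: (i) $L$ is independent in $H$, since for $i,j\in L$ the triangle inequality gives $\grom{x_i}{x_j}{v}\le\tfrac12(\dist{x_i}{v}+\dist{x_j}{v})\le 2\tau(G)$; (ii) every $i\in U$ has $\deg_H(i)\le k-1$, because $\grom{x_i}{x_i}{v}=\dist{x_i}{v}\ge 2\tau(G)+1$, so the position $i$ itself -- and every position carrying the vertex $x_i$ -- lies in $F_v(x_i)$, leaving $\deg_H(i)\le|F_v(x_i)|-1\le k-1$; (iii) every $i\in L$ has $\deg_H(i)\le|F_v(x_i)|\le k$ (here $i\notin F_v(x_i)$, as $\dist{x_i}{v}\le 2\tau(G)$). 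The $L/U$ split is essential: a graph satisfying only (ii) and (iii) but not (i) need not admit the desired matching -- e.g.\ $\overline{K_{k,k}}$ has no perfect matching when $k$ is odd.

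The core step is to verify Tutte's condition for $\overline H$: for every set of positions $S$, writing $s=|S|$, the graph $\overline H-S$ has at most $s$ components of odd order. The case $s>k$ is immediate, so assume $s\le k$. Any component $C$ of $\overline H-S$ avoiding $L$ lies in $U$; picking $w\in C$, all of $w$'s neighbours in $\overline H-S$ lie in $C$, so the $(2k-s)-|C|$ positions outside $C$ are all $H$-neighbours of $w$, whence $|C|\ge k-s+1$ by (ii). Also $L\setminus S$ is a clique of $\overline H-S$ by (i), so it lies in a single component $C_0$, and for $x\in L\setminus S$ the positions outside $C_0$ are $H$-neighbours of $x$, so $|C_0|\ge k-s$ by (iii). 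Let $r$ be the number of components avoiding $L$. If $L\setminus S\neq\emptyset$, then $r(k-s+1)\le(2k-s)-|C_0|\le k$, and since $(s+1)(k-s+1)-k=s(k-s)+1>0$ this forces $r\le s$, so $\overline H-S$ has at most $s+1$ components; if $L\setminus S=\emptyset$, the analogous count gives $r(k-s+1)\le 2k-s$, and $(s+2)(k-s+1)-(2k-s)=s(k-s)+2>0$ forces $r\le s+1$, again at most $s+1$ components. Finally, $\overline H-S$ cannot have exactly $s+1$ odd components: their orders would sum to $2k-s$ while a sum of $s+1$ odd numbers is $\equiv s+1\pmod 2$, which forces $0\equiv 1\pmod 2$. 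Hence $o(\overline H-S)\le s$, and Tutte's theorem yields a perfect matching of $\overline H$, i.e.\ the desired shallow pairing. I expect this paragraph -- forcing the component count of $\overline H-S$ down to exactly $s+1$ so that the parity remark closes the final gap -- to be the main obstacle.

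For the running-time claim, computing $v$ and all pairwise distances within $\{v\}\cup\pi$ costs $O(mn)$. Instead of computing $\tau(G)$, I would use monotonicity: for a threshold $t$, let $H_t$ be the conflict graph defined by $\grom{x_i}{x_j}{v}\ge 2t+1$; raising $t$ only deletes edges of $H_t$, so the property ``$\overline{H_t}$ has a perfect matching'' is monotone in $t$, and by the argument above it already holds at $t=\tau(G)$ (raising the threshold only shrinks the fibres, so the bound $|F_v(x)|\le k$ still applies). A binary search over $t\in\{0,\dots,n\}$ combined with a polynomial-time matching algorithm then returns the least feasible $t^{*}\le\tau(G)$ together with a perfect matching of $\overline{H_{t^{*}}}$, which is a $(2t^{*}+\tfrac12)$- and hence a $(2\tau(G)+\tfrac12)$-shallow pairing; a careful implementation runs within $O(mn^2)$.
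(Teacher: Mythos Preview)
Your proof is correct and follows the same high-level strategy as the paper---fix a vertex $v$ with small fibres (as in \Cref{lem:fiber-size}), build the ``compatibility'' graph on $\pi$, and extract a perfect matching---but the matching-existence step diverges. The paper observes directly that every vertex of $\overline{H}$ has degree at least $k$ and invokes Dirac's theorem to get a Hamiltonian cycle (hence a perfect matching). You instead carry out a full Tutte odd-component count via the $L/U$ split. Your argument works, but it is more elaborate than needed because you under-exploit the Gromov product: since $\grom{x}{y}{v}\le\min\{\dist{x}{v},\dist{y}{v}\}$, any conflict edge must have \emph{both} endpoints in $U$, so every position in $L$ is actually \emph{isolated} in $H$ (not merely part of an independent set), and thus has degree $2k-1$ in $\overline{H}$; combined with your (ii) this already gives minimum degree $\ge k$ in $\overline{H}$, and Dirac applies. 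With this observation your properties (i) and (iii), the $L/U$ case analysis, and the parity finish all become unnecessary.

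On the running-time side you do something the paper does not: you avoid computing $\tau(G)$ altogether by binary-searching the threshold $t$ and exploiting monotonicity of matching existence in $\overline{H_t}$. This is a genuine simplification, since in the paper the $O(mn^2)$ bound is dominated precisely by the thinness computation. Your approach relies on knowing that the $1$-median of $\pi$ (the minimiser of $F_\pi$) is the specific vertex $v$ guaranteed by \Cref{lem:fiber-size}; this is indeed how \cite{chepoi2017core} proceeds, but you should make the dependence explicit, as the paper itself does not commit to the $1$-median and instead searches over all candidate roots.
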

\begin{proof}
First, we will prove that the vertex $v$ whose existence is guaranteed by Lemma \ref{lem:fiber-size} can be calculated efficiently.    %\todo{Can we state the two claims to make the paper self-contained? {\bf Answer:} I am not sure because the statement of Claim 1 in \cite{chepoi2017core} refer to a vertex $m^*$ which is the median of the profile $\pi$ in the injective hull (or Hellification) of G and I don't want to define the injective hull. Then, we should argue that any vertex of the injective hull is at distance at most $\delta$ from some vertex $m$ of $G*$. Finally, Claim 2 says that if we define the fibers with to respect to $m$ then we get what we need. A solution could be to add a claim saying that the number of vertices of $\pi$ that belong to any fiber is at most $k$ and to refer to Claim 1 and 2 of \cite{chepoi2017core} for the proof of this claim. What do you think ? Yes, this should be enough, I think.} 
%of \cite{chepoi2017core}, 
 Indeed, an $O(mn^2)$ time algorithm was given in \cite{ChChDrDu+} to compute the thinness $\tau(G)$ of a graph $G$ with $n$ vertices and $m$ edges. The matrix of distances between every pair of vertices can be computed in $O(mn)$ time. Then, for every vertex $u\in V(G)$, it is possible to compute in $O(k^2)$ time the Gromov products $\grom{x}{y}{u}$ for every pair of vertices $x,y\in \pi$. Within the same running time, it is possible to also compute the fibers $F_{u}(x)$ for every $x\in \pi.$ Indeed, it suffices to add $y$ to $F_u(x)$ and $x$ to $F_u(y)$ when we compute a value $\grom{x}{y}{u}$ exceeding $ 2\tau(G)+\frac{1}{2}.$ If the size of a fiber $F_u(x)$ becomes larger than $k$ then we can abort and try the next vertex $u.$ The procedure stops once we have find a vertex $v$ such that $F_v(x)\le k$ for every vertex $x\in \pi.$ By Lemma \ref{lem:fiber-size}, this happens for at least one vertex $v\in V(G).$ 
Hence, it is possible to find in $O(mn+nk^2)$ time a vertex $v$ such that $|F_{v}(x)| \le k$ for every $x\in \pi.$ Let $H$ be the graph defined on the vertices of $\pi$ by adding an edge between two vertices $x$ and $y$ whenever $\grom{x}{y}{v}\le 2\tau(G)+\frac{1}{2}.$ By the choice of $v,$ every vertex of $H$ has degree at least $k.$ By Dirac's theorem, $H$ is Hamiltonian and thus has a perfect matching $M.$ Such a perfect matching can be computed in $O(\sqrt{n}m)$ time \cite{blum1990new, micali1980v}. The pairing defined by the end-vertices of edges in $M$ is a $(2\delta+\frac{1}{2})$-shallow pairing of $\pi$ that can be computed in $O(mn^2)$ time. Indeed, the running time is dominated by the algorithm that computes the thinness of $G$. 
    %\todo[inline]{Todo}
\end{proof}

% We shall use the following lemma.

% For a positive integer $k$, a set $\mathcal{C}$ of isometric paths is ``$k$-\emph{shallow}'' if $|\mathcal{C}|=k$ and there exists a subset $X \subseteq V(G)$ with $|X| = 2k$ such that $\mathcal{C}$ is a shallow pairing of $X$.

% \section{Important lemma and observations}

In the next section, for any pairing $P,$ we will denote by $\SP{P}$ a collection of isometric paths between every pair of vertices in $P,$ i.e. $\SP{P}:=\left\{\isom{x}{y} : \{x,y\}\in P\right\}$

\section{Additive approximation algorithm}\label{sec:algorithm}

In this section, we prove \Cref{thm:thinness} which implies \Cref{thm:hyperbolic} by \Cref{prop:hyp-thin}.

\begin{theorem}\label{thm:thinness}
    Let $G$ be a graph with $m$ edges, $n$ vertices and $k$ be an integer. Then, Algorithm \ref{algo:k-geodesic center} is a $O(mn^2\log n)$-time $(6\tau(G)+1)$-additive approximation algorithm for $k$-\textsc{Geodesic Center} on $G$.
\end{theorem}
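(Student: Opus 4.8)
The plan is to implement the two-stage strategy outlined in the introduction. First I would describe Algorithm \ref{algo:k-geodesic center} and then analyse its approximation guarantee and running time. The algorithm should: (1) guess the optimal radius $R$ by trying all values $0,1,\ldots,n$ (this contributes the $\log n$ factor if we binary search, or can be folded into the polynomial bound directly); (2) for the current guess $R$, solve the \emph{rooted} version of $(2k-1)$-\textsc{Geodesic Center}, i.e.\ find, for a suitably chosen root $r$ (or for all candidate roots), a collection of at most $2k-1$ isometric $r$-paths whose $(R+c_1\tau(G))$-balls cover $V(G)$, where $c_1$ is a small constant; (3) take the $2k$ end-vertices of these $r$-paths (other than $r$ itself, together with $r$), form the even profile $\pi$ consisting of these $2k$ vertices, apply \Cref{lem:algo-shallow-pairing} to obtain a $(2\tau(G)+\tfrac12)$-shallow pairing $P$ of $\pi$, and output $\SP{P}$, a collection of $k$ isometric paths.

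The correctness argument has two halves. For the rooted stage I would show that if $\opt{k}=R^*$, then there is a root $r$ and a collection of at most $2k-1$ isometric $r$-paths forming an $(r,R^*+O(\tau(G)))$-cover: starting from an optimal solution of $k$ isometric paths, one picks $r$ to be (close to) a well-chosen vertex and, using $\delta$-thinness, replaces each of the $k$ isometric paths $Q$ by at most two $r$-paths $\sigma(r,a)$, $\sigma(r,b)$ (with $a,b$ the endpoints of $Q$) whose union $O(\tau(G))$-fat-covers $Q$ — this is the standard ``a geodesic triangle is $\tau$-thin, so $\sigma(a,b)$ stays within $\tau$ of $\sigma(r,a)\cup\sigma(r,b)$'' estimate; one of the $k$ paths can be handled with a single $r$-path if $r$ is chosen on it, giving $2k-1$. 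This rooted cover can then be found (or approximated up to an additive $O(\tau(G))$) in polynomial time by a greedy/LP-free argument specific to $r$-paths, since $r$-paths are shortest-path-tree branches — I would invoke the $(r,R)$-packing duality recalled in \Cref{sec:prelim} to certify near-optimality. For the pairing stage: let $\pi$ be the $2k$ endpoints and let $P$ be the $\gamma$-shallow pairing with $\gamma=2\tau(G)+\tfrac12$ and witness vertex $v$, so $\grom{x}{y}{v}\le\gamma$ for each $\{x,y\}\in P$. By the lemma following \Cref{def:pairing}, $v$ lies within $\gamma+\tau(G)$ of each $\sigma(x,y)\in\SP{P}$. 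Now take any vertex $w\in V(G)$; it is within $R^*+O(\tau(G))$ of some $r$-path $\sigma(r,x_i)$ of the rooted cover, and $x_i$ is matched to some $x_j$ in $P$. Using $\grom{x_i}{x_j}{v}\le\gamma$ together with $\delta$-thinness of the triangle $\Delta(r,x_i,x_j)$ (or $\Delta(v,x_i,x_j)$), every vertex of $\sigma(r,x_i)$ is within $O(\tau(G))$ of $\sigma(v,x_i)$, which by the shallow-pairing bound is within $O(\tau(G))$ of $\sigma(x_i,x_j)$ — patched together this puts $w$ within $R^*+O(\tau(G))$ of $\SP{P}$. Tracking the constants carefully through these triangle estimates should yield exactly $R^*+6\tau(G)+1$.

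The running time is the sum of: $O(mn)$ for all-pairs distances; $O(mn^2)$ for computing $\tau(G)$ and running the shallow-pairing procedure of \Cref{lem:algo-shallow-pairing}; an $O(\log n)$ (or polynomial) factor for searching over $R$; and the cost of the rooted subroutine, which I expect to be dominated by the above. So the bound is $O(mn^2\log n)$.

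The main obstacle I anticipate is the rooted stage — specifically, proving that an optimal solution can be converted into a \emph{rooted} $(2k-1)$-path cover with only an additive $O(\tau(G))$ loss, and then that such a rooted cover can itself be computed (or near-optimally approximated) in polynomial time. The conversion direction needs a careful choice of the common root $r$ so that the ``doubling'' of each path does not lose more than $O(\tau(G))$ per path and so that the bookkeeping produces $2k-1$ rather than $2k$ paths; the algorithmic direction needs an argument that covering by $r$-paths is tractable, presumably via the packing/covering duality for $r$-paths together with a greedy peeling of the shortest-path tree rooted at $r$. Everything else — the shallow pairing, the final triangle-inequality chase, and the running-time accounting — is then routine given the tools assembled in \Cref{sec:pairing} and \Cref{sec:prelim}.
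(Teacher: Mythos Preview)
Your overall two-stage plan matches the paper exactly, and your treatment of the rooted stage (both the conversion of an optimal $k$-cover into a rooted $(2k-1)$-cover via thin triangles, as in Lemma~\ref{lem:path-tree}, and the greedy/packing-duality computation with binary search over $R$ and over roots) is correct and is precisely what Algorithms~\ref{algo:rooted-paths} and~\ref{algo:guaranteed-rooted-path} do.

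The gap is in your pairing stage. The claim ``every vertex of $\sigma(r,x_i)$ is within $O(\tau(G))$ of $\sigma(v,x_i)$'' is false: the root $r$ and the shallow-pairing witness $v$ are in general unrelated, so points of $\sigma(r,x_i)$ near $r$ can be arbitrarily far from $\sigma(v,x_i)$. Your argument uses only the pair $\{x_i,x_j\}\in P$, and via the thin triangle $\Delta(r,x_i,x_j)$ that pair covers only the portion of $\sigma(r,x_i)$ from $x_i$ down to distance $\grom{r}{x_j}{x_i}$ from $x_i$; the portion near $r$ is left uncovered. The paper's Lemma~\ref{lem:shallow-pairing-bound} repairs this by \emph{also} using the pair $\{r,r'\}\in P$ (recall that $r$ is itself in the profile $\pi$): the segment of $\sigma(r,x)$ near $r$ is $\delta$-close to $\sigma(r,r')$, the segment near $x$ is $\delta$-close to $\sigma(x,x')$, and the crucial step is the inequality $\grom{r}{x'}{x}\ge \grom{r}{r'}{x}-(4\delta+1)$, derived algebraically from the two shallow-pairing bounds $\grom{x}{x'}{v},\grom{r}{r'}{v}\le 2\delta+\tfrac12$. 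This forces the two covered segments to meet up to a middle gap of length at most $4\delta+2$; splitting that gap and adding one more $\delta$ for thinness yields the $3\delta+1$ loss of Lemma~\ref{lem:shallow-pairing-bound} and hence the final $6\tau(G)+1$.
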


We provide a brief outline of the proof for the above theorem and organisation of this section.
%\todo{We may move the outline of the proof at the beginning of Section 3. Indeed, to be correct the proof should refer to thinness (introduced later) rather than hyperbolicity. Anyway, this outline seems to me useful to give to the reader an intuition of how the proof proceeds. It might also hepl to structure the description of the algorithm. 3.1. solution of size $k$ and eccentricity $R$ $\Rightarrow$ rooted solution of size $2k-1$ and eccentricity $R+\delta$ 3.2. rooted solution of size $2k-1$ and eccentricity $R$ $\Rightarrow$ solution of size $k$ and eccentricity $R+3\delta+1$  3.3. Primal dual algorithm.
%We might also consider changing the order of these three subsections.} 
A collection $\mathcal{C}$ of isometric paths is ``rooted'' if all the paths in $\mathcal{C}$ have a common end-vertex. 
First we show in \Cref{sec:rooted}, that the rooted version of the \textsc{$k$-Geodesic Center} problem 
%\todo{Since the rooted solution constructed had $2k$-paths and we (probably) compared the quality with the optimum of $k$-\textsc{Geodesic Center}, can we say that solved the rooted vertion of $k$-\textsc{Geodesic Center}?{\bf Answer}: When $R$ is the minimum integer for which Algorithm 1 returns a cover. We know that it returns a packing of size $2k$ for $R-1.$ Hence, there is no collection of $2k-1$ isometric rooted paths of eccentricity $R-1.$ Since the rooted cover returned by Algorithm 1 for $R$ has eccentricity at most $R+2\delta$, it computes in polynomial a rooted solution which is optimal up to an $O(\delta)$ error.} 
where we require that the collection of isometric paths is rooted can be solved in polynomial time up to an additive $2\delta$ error in $\delta$-thin graphs. For that, we use a primal dual algorithm and a dichotomy to find an integer $R$ such that there is a collection of $2k-1$ isometric rooted paths of eccentricity $R+2\delta$ and no such collection has eccentricity smaller than $R.$ 
%Notice that the idea of solving a rooted version of the problem was already used in \cite{chakraborty2023} to provide an $O(\delta)$-approximation algorithm for the isometric path cover problem. 
Then in \Cref{sec:non-rooted}, we show that any collection $\cal C$ of $k$ isometric paths can be transformed into a rooted collection $\cal C'$ of size $2k-1$ such that the eccentricity of $\cal C'$ is at most the eccentricity of $\cal C$ plus $\delta$. From this observation and the choice of $R$, we conclude that no collection of $k$ isometric paths has eccentricity smaller than $R-\delta.$ To transform the rooted collection returned by the primal-dual algorithm into a non rooted collection of size $k,$ we also need a converse result. For that, using the $(2\delta+\frac{1}{2})$-shallow pairing property of $\delta$-thin graphs, in \Cref{sec:shallow-pairing-use}, we show that any rooted collection $\cal C'$ of $2k-1$ isometric paths can be transformed into a non rooted collection $\cal C$ of $k$ isometric paths such that the eccentricity of $\cal C$ is at most the eccentricity of $\cal C'$ plus $3\delta+1.$ We complete the proof in \Cref{sec:complete} as follows: the rooted collection of eccentricity $R+2\delta$ computed by our primal-dual algorithm can be transformed into a collection of size $k$ and eccentricity $R+5\delta+1.$ Since there is no such collection with an eccentricity less than $R-\delta,$ the collection of eccentricity $R+5\delta+1$ is optimal up to a $6\delta+1$ error.

\subsection{An algorithm for the \textsc{Rooted $k$-Geodesic Center} problem } \label{sec:rooted}

In this Section, we present an algorithm that, given a $\delta$-thin graph $G,$  computes an integer $R$ such that no collection of $2k-1$ rooted isometric paths has eccentricity smaller than $R$ and there is a collection of $2k-1$ isometric rooted paths of eccentricity $R+2\delta.$ 
Our description of this algorithm proceeds in two steps. First, we describe Algorithm \ref{algo:rooted-paths}. Given a graph $G,$ a root $r \in V(G)$ and integer $R,$ Algorithm \ref{algo:rooted-paths} outputs either an $(r,R+2\delta)$-cover of $G$ of size $2k-1$ or a $(r,R)$-packing of size $2k$. Then, Algorithm \ref{algo:guaranteed-rooted-path} uses Algorithm \ref{algo:rooted-paths} to perform a dichotomy. For every vertex $u\in V(G),$ Algorithm \ref{algo:guaranteed-rooted-path} computes the smallest value $R_u$ for which Algorithm \ref{algo:rooted-paths} outputs a cover. We show that $R:=\min\{R_u : u\in V(G)\}$ is an integer such that no collection of $2k-1$ rooted isometric paths has eccentricity smaller than $R$ and there is a collection of $2k-1$ isometric rooted paths of eccentricity $R+2\delta.$ We start with the following technical lemma.

\begin{lemma}
    \label{GP-Bound}
    Let $\isom{u}{v}\cup \isom{v}{w} \cup \isom{u}{w}$ be a geodesic triangle with $\dist{w}{\isom{u}{v}}\le R.$ Then, $(u|v)_w\le R.$  
\end{lemma}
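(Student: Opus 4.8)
The claim is that if $w$ lies within distance $R$ of the geodesic segment $\isom{u}{v}$, then the Gromov product $(u|v)_w = \frac12(d(u,w)+d(v,w)-d(u,v))$ is at most $R$. The plan is to exploit the fact that $\isom{u}{v}$ is a geodesic, so its length realises $d(u,v)$ exactly, and then to route a path from $u$ to $v$ through the nearby point $w$ and use the triangle inequality in the opposite direction from the usual Gromov-product estimate.

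First I would pick a point $p$ on $\isom{u}{v}$ with $d(w,p)\le R$, which exists by the hypothesis $d(w,\isom{u}{v})\le R$. (If one wants to stay in the graph rather than the geodesic space, choose $p$ to be a vertex of the path $\isom{u}{v}$ nearest to $w$; the distance is then at most $R$ by assumption, possibly after noting $R$ is an integer.) Since $p$ lies on the geodesic $\isom{u}{v}$, we have $d(u,p)+d(p,v)=d(u,v)$. Now apply the triangle inequality twice: $d(u,w)\le d(u,p)+d(p,w)$ and $d(v,w)\le d(v,p)+d(p,w)$. Adding these gives
\[
d(u,w)+d(v,w)\le d(u,p)+d(p,v)+2\,d(p,w)=d(u,v)+2\,d(p,w)\le d(u,v)+2R.
\]
Rearranging, $d(u,w)+d(v,w)-d(u,v)\le 2R$, so $(u|v)_w\le R$, as desired.

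There is essentially no obstacle here: the statement is a one-line consequence of the triangle inequality together with the defining property of a geodesic segment, and hyperbolicity is not even used. The only mild care needed is the interface between the graph metric and the geodesic space $(X,d)$ — the point $p$ realising the distance to $w$ may be an interior point of an edge, but since $w$ is a vertex and $\isom{u}{v}$ is a shortest path in $G$, one can always take $p$ to be a vertex of $\isom{u}{v}$, and the argument above goes through verbatim. I would present it in the graph metric for concreteness.
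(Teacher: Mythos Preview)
Your argument is correct and in fact more streamlined than the paper's. The paper also does not use $\delta$-thinness here; instead it picks a point $y\in\isom{u}{v}$ with $d(w,y)\le R$, finds its tripod-companion $y'$ on (say) $\isom{v}{w}$, and argues that $(u|v)_w\le d(w,y')\le d(w,y)\le R$ --- the first inequality coming from the tripod geometry and the second from one triangle inequality together with $d(v,y')=d(v,y)$. Your route bypasses the tripod entirely: two applications of the triangle inequality through the nearby point $p\in\isom{u}{v}$, plus the geodesic identity $d(u,p)+d(p,v)=d(u,v)$, give $(u|v)_w\le d(w,p)\le R$ directly. Both approaches show the lemma is purely metric (no hyperbolicity needed); yours is the shorter of the two and makes that independence from $\delta$ more transparent.
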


\begin{proof}
    Let $y$ be a vertex of $\isom{u}{v}$ at distance at most $R$ from $w.$ Let $y'$ be the vertex of $\Delta(u,v,w)$ such that $\varphi(y)=\varphi(y').$ Without loss of generality, we can assume that $y'\in \isom{v}{w}.$ By triangle inequality, $\dist{w}{v}=\dist{w}{y'}+\dist{y'}{v}\le \dist{w}{y}+\dist{y}{v}.$ Since $\dist{v}{y'}=\dist{v}{y},$ we get $\dist{w}{y'}\le \dist{w}{y}.$ Hence, $(u|v)_w\le \dist{w}{y'}\le \dist{w}{y} \le R.$ 
\end{proof}

\begin{algorithm}[t]
    \DontPrintSemicolon
   \begin{small}
   \SetKwInOut{KwIn}{Input}
       \SetKwInOut{KwOut}{Output}
       \KwIn{A $\delta$-thin graph $G$, $r\in V(G)$, $R\in \mathbb{N}$, an integer $k\leq |V(G)|$}
       \KwOut{$(r,R+2\delta)$-cover of $G$ of size $2k-1$ or an $(r,R)$-packing of size $2k$.}
       
       $X=V(G)$; $i=0$;
       
       \While{ $i\leq 2k-1$ and $X\neq \emptyset$}{
           Let $v_i\in X$ be a vertex with $\dist{r}{v_i} \geq \dist{r}{z}$ for all $z\in X$;
   
           Let $\sigma_i$ be any $(r,v_i)$-isometric path;
   
           %Let $A_i=\{x \in X\colon v_i \in \specialset{r}{x}{R}{X}\}$; 
           
           %$X_i= \displaystyle\bigcup\limits_{x\in A_i} \specialset{r}{x}{R}{X}$;
   
           $X_i=\left\{u\in X:\ \exists\ \text{an $r$-path}\ P\ \text{such that}\ d(u,P)\le R\  \text{and}\ d(v_i,P)\le R \right\}$
   
           $X = X\setminus X_i$;
   
           $\mathcal{P} = \mathcal{P}\cup \{v_i\}$;
           
           $\mathcal{C}=\mathcal{C} \cup \{ \sigma_i\}$.
   
           $i = i+1$;
       }
       
      \If {$X=\emptyset$}
      {\Return $\mathcal{C}$;}
      \Else {\Return $\mathcal{P}$ }

      \caption{}\label{algo:rooted-paths}
       
   \end{small}
    \end{algorithm}

   %For technical reasons, we introduce the following notation. For two vertices $r,x\in V(G)$, a subset $X$ of $V(G)$ and an integer $k$, let $\specialset{r}{x}{k}{X}$ denotes the set $\left\{ y \colon y \in X, \exists P=\isom{r}{x}, y\in \ballp{k}{P}\right\}$. 
   % $\specialset{r}{x}{k}{X} := \ballp{k}{I(r,x)}\cap X.$
   %In other words, the set $\specialset{r}{x}{k}{X}$ contains all those vertices in $X$ that lie in a $k$-neighborhood of some $(r,x)$-isometric path. In the following proof, we shall use the notations defined in Algorithm~\ref{algo:rooted-paths}.

   \begin{lemma}
       \label{lem:rooted-paths}
       %Let $G$ be a $\delta$-thin graph, $r$ be a vertex of $G,$ and $R,k$ be two integers. Then, 
       Algorithm~\ref{algo:rooted-paths} either returns an $(r, R)$-packing of size $2k$ or an $(r,R+2\delta)$-cover of $G$ of size at most $2k-1$.
   \end{lemma}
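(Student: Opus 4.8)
The plan is to split on which of the two objects the algorithm returns: it outputs $\mathcal{C}$ precisely when $X$ becomes empty during the loop, and $\mathcal{P}$ precisely when all iterations $i=0,\dots,2k-1$ execute with $X\neq\emptyset$ throughout. The case of $\mathcal{P}$ is the easy one and I dispose of it first. Here $\mathcal{P}=\{v_0,\dots,v_{2k-1}\}$. Since $v_i\in X_i$ for every $i$ (the path $\sigma_i$ itself witnesses this) and $X$ is non-increasing, $v_i$ is deleted in iteration $i$ and never reappears, so the $v_i$ are pairwise distinct and $|\mathcal{P}|=2k$. If some $r$-path $Q$ satisfied $v_a,v_b\in\ballp{R}{V(Q)}$ with $a<b$, then $d(v_a,Q)\le R$ and $d(v_b,Q)\le R$, and since $v_b\in X$ at the start of iteration $a$, the path $Q$ witnesses $v_b\in X_a$, so $v_b$ would be deleted in iteration $a$ — a contradiction. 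Hence $|\ballp{R}{V(Q)}\cap\mathcal{P}|\le 1$ for every $r$-path $Q$, i.e.\ $\mathcal{P}$ is an $(r,R)$-packing of size $2k$.

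\textbf{The cover, reduced to a local statement.} Now suppose $X=\emptyset$ at termination, so $\mathcal{C}=\{\sigma_0,\dots,\sigma_t\}$ and every vertex lies in some $X_i$. It therefore suffices to prove
\[
X_i\ \subseteq\ \ballp{R+2\delta}{V(\sigma_i)}\qquad\text{for every }i .
\]
Fix $u\in X_i$ together with an $r$-path $P$ such that $d(u,P)\le R$ and $d(v_i,P)\le R$. The two facts that drive the argument are: $d(r,u)\le d(r,v_i)$, because $u\in X$ at iteration $i$ and $v_i$ is a vertex of $X$ farthest from $r$; and, writing $P=\isom{r}{w}$ with $w$ its far endpoint, both $u$ and $v_i$ lie within $R$ of the geodesic $P$.

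\textbf{Proof of the local statement.} I would use the projection consequence of $\delta$-thinness: if $x$ is a vertex of a geodesic $\gamma$ closest to a vertex $y$, then $\grom{y}{z}{x}\le\delta$ for every $z\in\gamma$ (otherwise a short step from $x$ toward $z$ along $\gamma$ would get strictly closer to $y$, by thinness). Let $p$ and $q$ be vertices of $P$ closest to $v_i$ and to $u$, respectively. Applying the projection bound with $z=r\in P$ gives $\grom{r}{v_i}{p}\le\delta$ and $\grom{r}{u}{q}\le\delta$, hence $d(r,q)\le d(r,u)-d(u,P)+2\delta$ and $d(r,p)\ge d(r,v_i)-d(v_i,P)$, which together with $d(r,u)\le d(r,v_i)$ yield
\[
d(r,q)-d(r,p)\ \le\ d(v_i,P)-d(u,P)+2\delta\ \le\ R-d(u,P)+2\delta .
\]
Next, consider the geodesic triangle on $r,v_i,p$ with sides $\sigma_i$, $P[r,p]$ and $\isom{v_i}{p}$ (the last of length $\le R$). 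Since $d(v_i,P[r,p])\le R$, \Cref{GP-Bound} gives $\grom{r}{p}{v_i}\le R$; combined with $\grom{r}{v_i}{p}\le\delta$, $\delta$-thinness of this triangle shows that $p$ lies within $2\delta$ of $\sigma_i$ and that every vertex of $P[r,p]$ lies within $\delta$ of $\sigma_i$ except possibly for a final segment of length $\grom{r}{v_i}{p}\le\delta$ near $p$. As $q$ and $p$ are linearly ordered along $P$ by distance from $r$, there are two cases. If $d(r,q)\le d(r,p)$ then $q\in P[r,p]$, so $d(q,\sigma_i)\le 2\delta$ and $d(u,\sigma_i)\le d(u,P)+2\delta\le R+2\delta$. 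If $d(r,q)>d(r,p)$ then $d(q,p)=d(r,q)-d(r,p)\le R-d(u,P)+2\delta$ by the displayed inequality, so with $d(p,\sigma_i)\le 2\delta$ we get $d(u,\sigma_i)\le d(u,P)+d(q,p)+2\delta\le R+O(\delta)$; a careful accounting of the thin‑triangle constants (using $\delta$-thinness, and that the additional $\delta$'s overlap) sharpens this to $R+2\delta$. This proves the inclusion.

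\textbf{Size bookkeeping and the main difficulty.} Finally $|\mathcal{C}|\le 2k-1$: as soon as the loop creates $v_{2k-1}$, the set $\{v_0,\dots,v_{2k-1}\}$ is already an $(r,R)$-packing of size $2k$ (by the first paragraph) and is what is returned, so when the cover is returned at most $2k-1$ of the $v_i$ were created. Everything except the inclusion $X_i\subseteq\ballp{R+2\delta}{V(\sigma_i)}$ is immediate from the greedy deletion rule; that inclusion is the only place hyperbolicity is used and is the main obstacle, since a crude triangle‑inequality estimate gives only error $\approx 2R$. The point is to combine the choice of $v_i$ as a farthest vertex — which forces the projection $q$ of $u$ onto $P$ to lie at most $\approx R$ beyond the projection $p$ of $v_i$ — with the tree‑like behaviour of closest‑point projections onto geodesics in $\delta$-thin graphs and with \Cref{GP-Bound}, so that $u$ stays within $R+2\delta$ of $\sigma_i$ instead of within $O(R)$.
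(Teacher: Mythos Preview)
Your treatment of the packing case and the reduction to the local inclusion $X_i\subseteq\ballp{R+2\delta}{V(\sigma_i)}$ are fine and match the paper. The difference is in how you prove that inclusion: the paper never uses nearest-point projections; instead, writing $x$ for the far end of $P$, it compares the two Gromov products $(u|x)_r$ and $(v_i|x)_r$, takes the points $m_i,m_i',m_i''$ on $\sigma_i$, $P$, and $\isom{r}{u}$ at distance $\min\{(u|x)_r,(v_i|x)_r\}$ from $r$, and then applies the $\delta$-thinness of the two triangles $\isom{r}{u}\cup\isom{u}{x}\cup P$ and $\sigma_i\cup\isom{v_i}{x}\cup P$, which share the side $P$ and have their tripod split points aligned. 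This gives $d(u,m_i)\le R+\delta+\delta$ with exactly two applications of thinness and no further slack (the $R$ comes from \Cref{GP-Bound}, and in the case $(v_i|x)_r\le(u|x)_r$ from $d(r,u)\le d(r,v_i)$).

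Your projection route does not reach $R+2\delta$ in Case~B, and the sentence ``a careful accounting of the thin-triangle constants \dots\ sharpens this to $R+2\delta$'' is the gap. Concretely, from $\grom{r}{u}{q}\le\delta$ you already incur a $2\delta$ when bounding $d(r,q)$, so $d(q,p)\le d(v_i,P)-d(u,P)+2\delta$; combined with $d(p,\sigma_i)\le 2\delta$ this yields
\[
d(u,\sigma_i)\ \le\ d(u,P)+d(q,p)+d(p,\sigma_i)\ \le\ d(v_i,P)+4\delta\ \le\ R+4\delta.
\]
Replacing the triangle on $r,v_i,p$ by the triangle on $r,v_i,w$ (so that $q$ can be compared directly with the tripod split of that triangle) improves this only to $R+3\delta$: one still pays $\delta$ for the thinness step and the residual $2\delta$ from the projection inequality does not ``overlap'' with it. The extra $\delta$'s arise because the projections $p,q$ are only within $\delta$ of the split points $(v_i|w)_r$, $(u|w)_r$, and working with projections forces you to bridge that discrepancy. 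To get the stated constant you should bypass projections and argue, as the paper does, directly with the Gromov products relative to $r$ and the far endpoint of $P$.
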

   
   \begin{proof}
   First assume that Algorithm~\ref{algo:rooted-paths} returns a subset of vertices $\mathcal{P}$. Suppose there exists two vertices $\{v_i,v_j\}\subseteq \mathcal{P}$ with $i<j$ such that 
   \begin{enumerate*}[label=(\textbf{\alph*})]
       \item $v_i,v_j$ were included in $\mathcal{P}$ at the $i^{th}$ and $j^{th}$ iteration of Algorithm~\ref{algo:rooted-paths}, and 
       \item there exists an $r$-path $P$ such that $\{v_i,v_j\} \subseteq \ballp{R}{V(P)} \cap \mathcal{P}$. But then $v_j\in X_i$ and therefore was removed from $X$ in the $i^{th}$ iteration, a contradiction.
   \end{enumerate*} 
   % Observe that, the vertex $v_j$ does not lie in $X_i $ which implies that the distance between any isometric path $\isom{r}{v_j}$ and $v_i$ is greater than $R$. 
   
   Now assume that Algorithm~\ref{algo:rooted-paths} returns a collection of $r$-paths $\mathcal{C}$. For a vertex $u\in V(G)$, let $v_i\in \mathcal{P}$ be the vertex such that $u\in X_i$ when $u$ was removed from $X.$ By definition of $X_i,$ there exists an $r$-path $P$ such that $\dist{u}{P}\le R$ and $\dist{v_i}{P}\le R.$
   %$\{u,v_i\}\subseteq \ballp{R}{V(P)}$. 
   Let $x$ be the end-vertex of $P$ distinct from $r.$ Let $\sigma_i$ be the $r$-path added to $\cal C$ by Algorithm~\ref{algo:rooted-paths} during the $i^{th}$ iteration and $\isom{r}{u}$ be any isometric path between $r$ and $u$. We distinguish two cases (see \Cref{fig:Lemma-rooted-paths}).  
   \begin{itemize}
       \item {\bf Case 1.} First suppose that $(u|x)_r \le (v_i|x)_r.$ Let $m_i,$ $m'_i,$ $m''_i$ be the points of $\sigma_i,$ $P$ and $\isom{r}{u}$ at distance $(u|x)_r$ from $r.$ Since $\dist{u}{\isom{r}{x}}\le R,$ \Cref{GP-Bound} implies $\dist{u}{m''_i}=(r|x)_u\le R.$ Hence, the $\delta$-thinness of the geodesic triangles $\isom{r}{u} \cup \isom{u}{x} \cup P$ and $\sigma_i \cup \isom{v_i}{x} \cup P$ implies $\dist{u}{m_i}\le \dist{u}{m''_i}+\dist{m''_i}{m'_i}+\dist{m'_i}{m_i}\le R+2\delta.$
       \item {\bf Case 2.}  Now, assume that $(v_i|x)_r \le (u|x)_r.$
       Let $m_i,$ $m'_i,$ $m''_i$ be the vertices of $\sigma_i,$ $P$ and $\isom{r}{u}$ at distance $(v_i|x)_r$ from $r.$ By the choice of $v_i,$ $\dist{r}{m_i}+\dist{m_i}{v_i} = \dist{r}{v_i} \ge \dist{r}{u} = \dist{r}{m''_i}+\dist{m''_i}{u}.$ Since $\dist{r}{m_i} = \dist{r}{m''_i},$ we deduce that $\dist{m''_i}{u}\le \dist{m_i}{v_i}.$ Since $\dist{v_i}{\isom{r}{x}}\le R,$ \Cref{GP-Bound} implies $\dist{m''_i}{u} \le \dist{m_i}{v_i} = (r|x)_{v_i} \le R.$ Using the thinness of geodesic triangles $\isom{r}{u} \cup \isom{u}{x} \cup P$ and $\sigma_i \cup \isom{v_i}{x} \cup P,$ we derive that $\dist{u}{m_i}\le \dist{u}{m''_i}+\dist{m''_i}{m'_i}+\dist{m'_i}{m_i}\le R+2\delta.$ 
   \end{itemize}
       We conclude that any vertex $u\in V(G)$ is at distance at most $R+2\delta$ from some path in~$\mathcal{C}$.
       %\todo[inline]{Draw the picture.}
   \end{proof}
   
   \begin{figure}[t]
       \centering
       \begin{tabular}{cc}
       \includegraphics[width=0.47\linewidth]{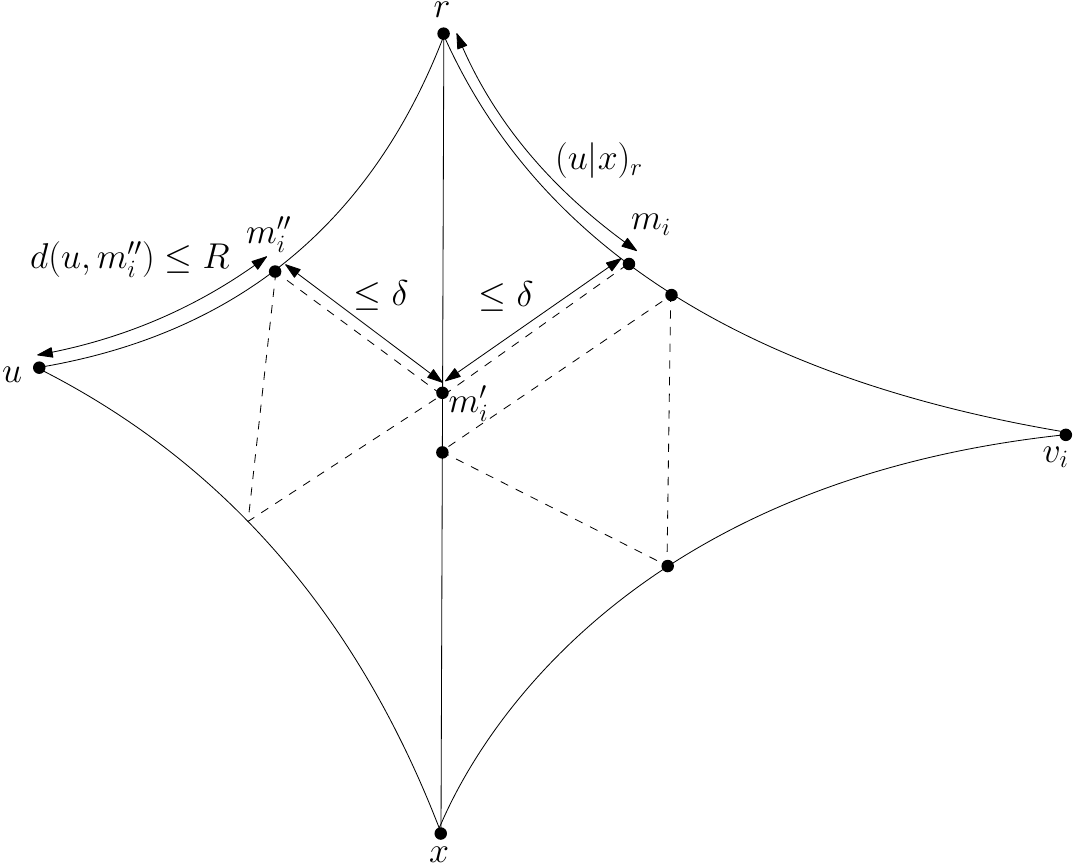} &  
       \includegraphics[width=0.47\linewidth]{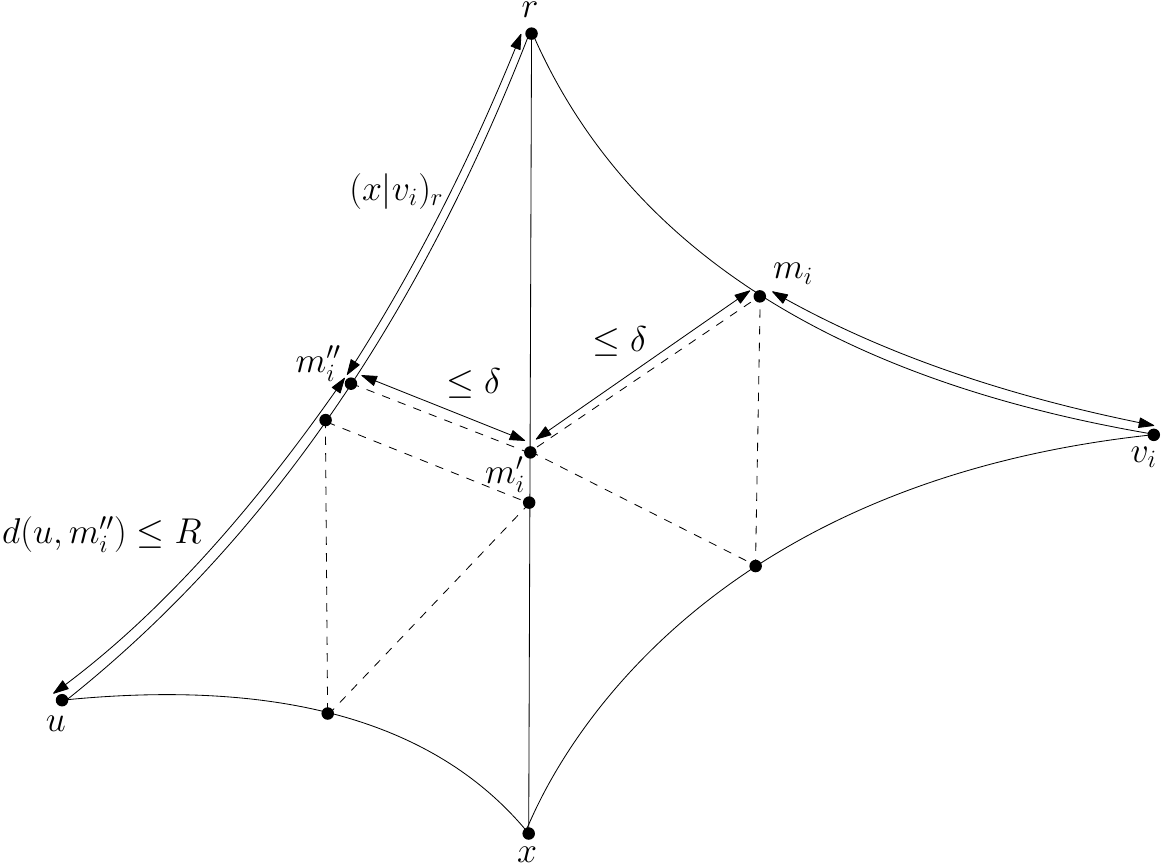}  \\
       {\bf Case 1.} $(u|x)_r \le (v_i|x)_r$ & {\bf Case 2.} $(u|x)_r > (v_i|x)_r$
       \end{tabular}
       \caption{Illustrations for the two cases in the proof of \Cref{lem:rooted-paths}.}
       \label{fig:Lemma-rooted-paths}
   \end{figure}
   
    \begin{algorithm}[t]
    \DontPrintSemicolon
   \begin{small}
   \SetKwInOut{KwIn}{Input}
       \SetKwInOut{KwOut}{Output}
       \KwIn{A $\delta$-thin graph $G$, an integer $k\leq |V(G)|$.}
       \KwOut{ An integer $R$ and an $(u,R+2\delta)$-cover of size $2k-1$ such that there is no $(v,R')$-cover with $R'<R.$}
   
       \For{$v\in V(G)$}{

           Let $R_v$ be the smallest $R$ integer for which Algorithm~\ref{algo:rooted-paths} returns a $(v,R+2\delta)$-cover $\mathcal{C}_v$ of size $2k-1.$

           \tcc{The above steps can be implemented by using Algorithm~\ref{algo:rooted-paths} in combination with a binary search on $R_v \in \{0,1, \ldots, |V(G)|\}$.}
       }   
       Let $u\in V(G)$ such that $R_u=\min\{R_v\colon v\in V(G)\}$.

       \Return $(\mathcal{C}_u,R_u)$.

%       Let $\pi$ be a profile of size $2k$ consisting of $u$ and $2k-1$ other end-vertices of the paths in $\mathcal{C}_u.$

%       Computes a $2\tau(G)+1$-shallow pairing $P$ of $\pi.$
       
%       \Return $\mathcal{S}(P)$.

      \caption{}\label{algo:guaranteed-rooted-path}
       
   \end{small}
    \end{algorithm}

    \begin{lemma}
       \label{lem:correct-rooted-cover}
       %Let $G$ be a $\delta$-thin graph, and $k$ be an integer. Then, 
       Algorithm~\ref{algo:guaranteed-rooted-path} returns an integer $R$ and a $(u,R+2\delta)$-cover $\mathcal{C}_u$ of $G$ of size $2k-1$ such that, there is no $(v,R')$-cover of size $2k-1$ with $R'< R.$ 
   \end{lemma}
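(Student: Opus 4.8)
The plan is to treat the two assertions of the lemma separately, writing $R:=R_u=\min_{v\in V(G)}R_v$, where by definition $R_v$ is the least integer for which Algorithm~\ref{algo:rooted-paths} on input $(G,v,R_v)$ terminates with $X=\emptyset$; equivalently, by Lemma~\ref{lem:rooted-paths}, it returns a $(v,R_v+2\delta)$-cover of size at most $2k-1$ rather than a $(v,R_v)$-packing of size $2k$. Before anything else I would check that each $R_v$ is well defined and at most $|V(G)|$: if $R$ equals the eccentricity of $v$, then the path $\sigma_0$ chosen in the first iteration of Algorithm~\ref{algo:rooted-paths} passes through $v$, so for $P=\sigma_0$ we have $d(v_0,P)=0\le R$ and $d(u,P)\le d(u,v)\le R$ for every $u\in V(G)$; hence $X_0=X=V(G)$, the loop halts after one iteration, and a cover of size $1\le 2k-1$ is returned. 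This guarantees that $R_v$ exists and justifies the search range $\{0,\dots,|V(G)|\}$ used in Algorithm~\ref{algo:guaranteed-rooted-path}.

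For the first assertion I would simply observe that, by the choice of $R_u$, Algorithm~\ref{algo:rooted-paths} on input $(G,u,R_u)$ returns a collection $\mathcal{C}_u$, which by Lemma~\ref{lem:rooted-paths} is a $(u,R_u+2\delta)$-cover of $G$ of size at most $2k-1$; should it contain fewer than $2k-1$ paths, padding with copies of one of its $u$-paths (or with the trivial one-vertex $u$-path) leaves both the covering radius and the common end-vertex unchanged, so $|\mathcal{C}_u|=2k-1$ may be assumed.

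The core of the proof is the second assertion, which I would establish by contradiction. Suppose that some vertex $v$ and some integer $R'<R=R_u$ admit a $(v,R')$-cover of $G$ of size $2k-1$, and run Algorithm~\ref{algo:rooted-paths} on input $(G,v,R')$. By Lemma~\ref{lem:rooted-paths} its output is either a $(v,R'+2\delta)$-cover of size at most $2k-1$ or a $(v,R')$-packing of size $2k$. The first possibility is excluded, since it would force $R_v\le R'$ by minimality of $R_v$, contradicting $R'<R_u\le R_v$. Hence the output is a $(v,R')$-packing $S$ with $|S|=2k$; but then the packing/cover duality recalled in \Cref{sec:prelim} (the $R'$-neighbourhood of any $v$-path meets $S$ in at most one vertex) shows that every $(v,R')$-cover of $G$ consists of at least $2k>2k-1$ paths, contradicting the assumed cover. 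This contradiction proves the second assertion, and with it the lemma.

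Once Lemma~\ref{lem:rooted-paths} and the packing lower bound are available, this deduction is short, so I do not anticipate a genuine obstacle in the core argument itself. The one point I would handle with care is that the binary search inside Algorithm~\ref{algo:guaranteed-rooted-path} actually returns the true minimum $R_v$: this requires the predicate ``Algorithm~\ref{algo:rooted-paths} returns a cover on input $(G,v,R)$'' to be a threshold predicate in $R$, which I would justify via a monotonicity statement, namely that increasing $R$ only enlarges each set $X_i$ deleted by the greedy, so the loop can only stop earlier. If making this monotonicity airtight proves delicate, one may instead run a linear scan over $R\in\{0,\dots,|V(G)|\}$, which remains polynomial time and renders $R_v$ correct directly from its definition.
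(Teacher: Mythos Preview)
Your proof is correct and follows essentially the same line as the paper's: both argue that for any $R'<R_u$ and any $v$, Algorithm~\ref{algo:rooted-paths} on input $(G,v,R')$ must return a $(v,R')$-packing of size $2k$, whence the packing/cover duality rules out any $(v,R')$-cover of size $2k-1$. You are more careful than the paper about the well-definedness of $R_v$, about padding to size exactly $2k-1$, and about the threshold-in-$R$ monotonicity needed to justify binary search; the paper leaves these points implicit.
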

   \begin{proof}
    %Let $G$ be a connected graph and $\delta=\tau(G)$ its thinness. 
    %Then, due to Proposition~\ref{thm:hyp-thin} the thinness $\delta$ of $G$ is at most $4\delta$. 
    Let $\left(\mathcal{C}_u,R_u\right)$ be the output of Algorithm~\ref{algo:guaranteed-rooted-path}. Since $R_u=\min\{R_v\colon v\in V(G)\}$ is the minimum integer $R$ for which Algorithm~\ref{algo:rooted-paths} returns $(u,R+2\delta)$-cover $\mathcal{C}_u,$ Algorithm~\ref{algo:rooted-paths} returns a $(v,R')$-packing of size $2k$ for any $R'<R_u$ and any $v\in V(G).$ Hence, there is no $(v,R')$-cover of size $2k-1$ with $R'< R.$ 
   \end{proof}

\subsection{From non rooted to rooted collection of paths}\label{sec:non-rooted}

In the following lemma, we show that if there is an $R$-cover of size $k$ of a $\delta$-thin graph then there is a rooted $(r,R+\delta)$-cover of size $2k-1$ of $G,$ for some $r\in V(G)$.

\begin{lemma}\label{lem:path-tree}
    Let $\mathcal{C}$ be an $R$-cover of a $\delta$-thin graph $G$ with $|\mathcal{C}|=k,$ $X_{\mathcal{C}}$ the set of end-vertices of paths in $\mathcal{C}$ and $r\in X_{\mathcal{C}}.$ Then, any collection isometric paths $\mathcal{C}_r = \left\{ \isom{r}{x} \colon x\in X_{C}\setminus \{r\} \right\}$ is an $(r, R+\delta)$-cover of $G$.
\end{lemma}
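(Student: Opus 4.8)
The plan is to show that every vertex $u\in V(G)$ is within distance $R+\delta$ of some path in $\mathcal{C}_r$. Fix such a $u$. Since $\mathcal{C}$ is an $R$-cover, there is a path $P\in\mathcal{C}$ with $d(u,P)\le R$; let $a,b$ be the two end-vertices of $P$, so $P=\isom{a}{b}$. If $r\in\{a,b\}$ then $P$ itself is (up to reparametrisation) one of the paths $\isom{r}{x}$ used to build $\mathcal{C}_r$, so $d(u,\mathcal{C}_r)\le R\le R+\delta$ and we are done. So assume $r\notin\{a,b\}$; then both $\isom{r}{a}$ and $\isom{r}{b}$ belong to $\mathcal{C}_r$, since $a,b\in X_{\mathcal C}\setminus\{r\}$.

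The key step is to argue with the geodesic triangle $\Delta(r,a,b)$ formed by the three sides $\isom{r}{a}$, $\isom{r}{b}$, and $P=\isom{a}{b}$, using $\delta$-thinness. Let $\varphi$ be the isometry onto the tripod $T(r',a',b')$ with centre $m$, and let $m_r\in P$, $m_a\in\isom{r}{b}$, $m_b\in\isom{r}{a}$ be the three "Gromov points" that all map to $m$. Let $y$ be a point of $P$ with $d(u,y)\le R$. I split on which leg of the tripod contains $\varphi(y)$: since $\varphi(y)$ lies on the segment from $r'$ to $a'$ through $m$ or on the segment from $r'$ to $b'$ through $m$ — equivalently $\varphi(y)$ is on the $a'$-side or the $b'$-side of $m$ — there is a point $y'$ on $\isom{r}{a}$ or on $\isom{r}{b}$ (respectively) with $\varphi(y)=\varphi(y')$, hence $d(y,y')\le\delta$ by $\delta$-thinness. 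Either way, $y'$ lies on a path of $\mathcal{C}_r$, so
\[
d(u,\mathcal{C}_r)\le d(u,y')\le d(u,y)+d(y,y')\le R+\delta,
\]
as required.

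The main obstacle — and it is only a mild one — is being careful that the point $y'$ produced really does lie on one of the two geodesics $\isom{r}{a}$, $\isom{r}{b}$ that were actually inserted into $\mathcal{C}_r$, rather than on the opposite side $P$ of the triangle (where $\varphi(y)=\varphi(y)$ trivially). This is handled precisely by the tripod structure: every point of $P$ maps into $T(r',a',b')$ onto the union of the $a'$-leg and the $b'$-leg of the centre $m$ (the point $m_r$ being the unique one mapping to $m$ itself), and each such point is the image of exactly one further point, lying on $\isom{r}{a}$ if $\varphi(y)$ is between $m$ and $a'$, and on $\isom{r}{b}$ if it is between $m$ and $b'$; the boundary case $\varphi(y)=m$ gives $y'=m_b\in\isom{r}{a}$ (or $m_a\in\isom{r}{b}$), both of which are in $\mathcal{C}_r$. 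One should also note the degenerate possibilities where $a=b$ (then $P$ is a single vertex and $\isom{r}{a}\in\mathcal{C}_r$ passes through it) or $u$ itself is an end-vertex of a path in $\mathcal{C}$, which only make the bound easier. No hyperbolicity beyond $\delta$-thinness of this single triangle is needed, and the argument runs in parallel to the two-case analysis already used in the proof of Lemma \ref{lem:rooted-paths}.
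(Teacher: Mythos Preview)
Your argument is essentially the same as the paper's: form the geodesic triangle on $r,a,b$ with $P=\isom{a}{b}$ as one side and the two paths $\isom{r}{a},\isom{r}{b}\in\mathcal{C}_r$ as the other two, then use $\delta$-thinness to conclude that the point $y\in P$ nearest $u$ is within $\delta$ of one of the two $r$-paths. One small slip: in the case $r\in\{a,b\}$ you assert that $P$ itself lies in $\mathcal{C}_r$, but the lemma concerns \emph{any} choice of geodesics $\isom{r}{x}$, so the geodesic from $r$ to $b$ in $\mathcal{C}_r$ need not be $P$; this is harmless, since your triangle argument still applies to the degenerate triangle (a geodesic bigon) and gives $d(y,\isom{r}{b})\le\delta$ all the same.
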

\begin{proof}

For a vertex $u\in V(G)$, let $P=\isom{v_1}{v_2} \in \mathcal{C}$ be a path such that $\dist{u}{P} \leq R$ and $w\in V(P)$ be a vertex with $\dist{u}{w} \leq R$. Let $P_1,P_2\in \mathcal{C}_r$ where $P_i = \isom{r}{v_i}$. Since the geodesic triangle $P\cup P_1 \cup P_2$ of $G$ is $\delta$-thin, either $\dist{w}{P_1}\leq \delta$ or $\dist{w}{P_2}\leq \delta$.  Therefore, either $\dist{u}{P_1} \leq R+\delta$ or $\dist{u}{P_2} \leq R+\delta$. Hence, $\mathcal{C}_r$ is a $(r, R+\delta)$-cover of $G$.
\end{proof}

\subsection{From rooted to non rooted collection of paths}
\label{sec:shallow-pairing-use}
\begin{figure}
    \centering
    \includegraphics*[width=0.5\textwidth]{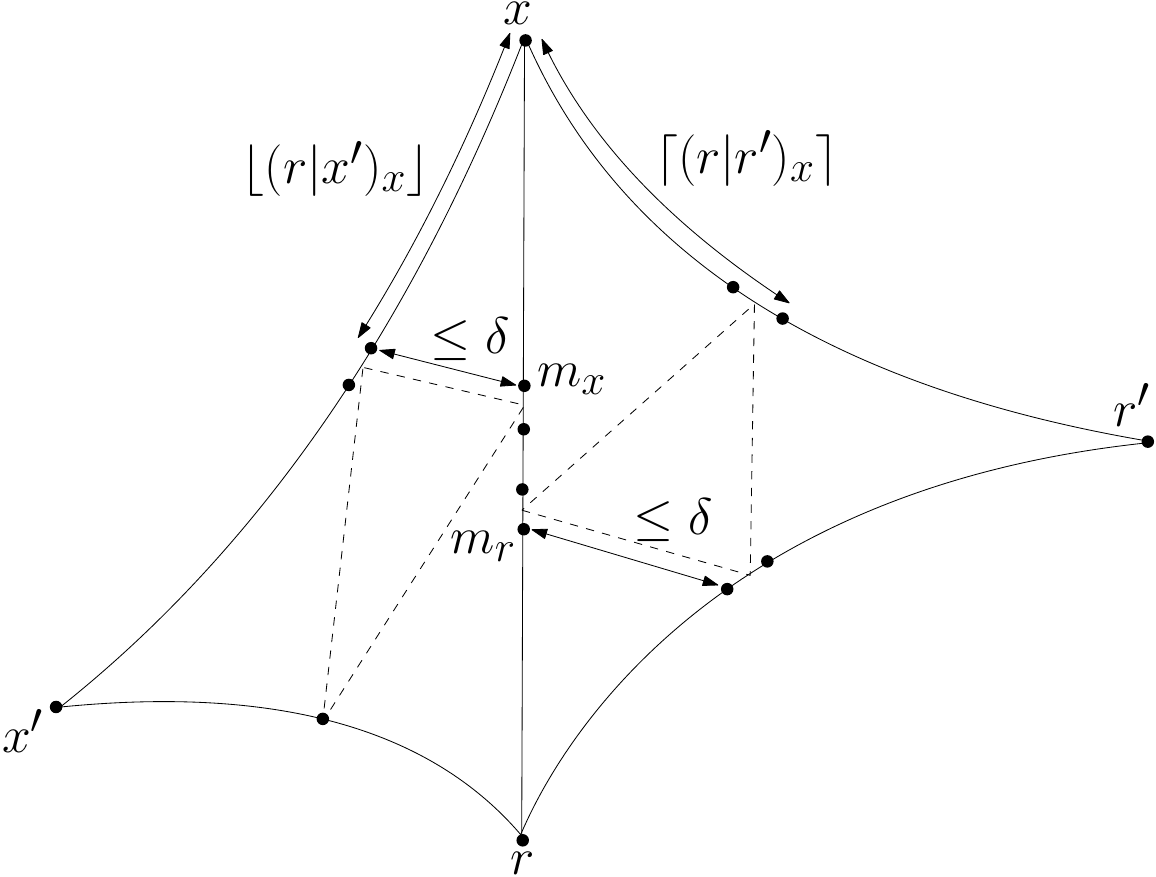}
%    \begin{tikzpicture}
%         \foreach \x/\y [count = \n] in {0/0, 0/3, 2/0.5, -2/0.5, 0/1, -1/0.25, -1/1.75, 0/2, 1/0.25, 1/1.75}
%    {
%    	\filldraw (\x, \y) circle (1.5pt); % draw the points 
%    }

%    \foreach \x/\y/\w/\z [count = \n] in {0/0/0/3, 0/3/2/0.5, 2/0.5/0/0, -2/0.5/0/0, -2/0.5/0/3}
%    {
%    	\draw (\x, \y) -- (\w,\z); % draw the points 
%    }

%    \draw[densely dotted] (0,1) -- (-1,0.25) -- (-1,1.75) -- cycle;

%    \draw[densely dotted] (0,2) -- (1,0.25) -- (1,1.75) -- cycle;
%    \node[above] at (0,3) {\scriptsize $x$}; \node[below] at (0,0) {\scriptsize $r$}; \node[left] at (-2,0.5) {\scriptsize $x'$}; \node[right] at (2,0.5) {\scriptsize $r'$}; \node[left] at (0,2) {\scriptsize $m_x$}; \node[right] at (0,1) {\scriptsize $m_r$};
    
%    \end{tikzpicture}
    \caption{Illustration of the notations used in Proof of \Cref{lem:shallow-pairing-bound}.}
    \label{fig:my_label}
\end{figure}

Conversely, the next lemma shows that, from a set of rooted isometric paths of a $\delta$-thin graph of size $2k-1$ and eccentricity $R$, it is possible to construct a $(R+3\delta+1)$-cover of $G$ of size $k$.

\begin{lemma}\label{lem:shallow-pairing-bound}
    Let $r$ be a vertex of a $\delta$-thin graph $G$. For integers $R$ and $k$, let $\mathcal{C}_r$ be an $(r,R)$-cover of $G$ with $|\mathcal{C}_r|=2k-1.$ Let $\pi_r$ be a profile of length $2k$ containing all end vertices of the paths in $\mathcal{C}_r$ and $P_r$ be a $(2\delta+\frac{1}{2})$-shallow pairing of $\pi_r$. Then, $\SP{P_r}$ is a $(R+3\delta+1)$-cover of $G$ of size $k.$ 
\end{lemma}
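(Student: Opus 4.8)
The plan is to establish the stronger claim that \emph{every vertex lying on a path of $\mathcal{C}_r$ is within distance $3\delta+1$ of $\bigcup_{Q\in\SP{P_r}}V(Q)$}. Granting this, since $\mathcal{C}_r$ is an $(r,R)$-cover every vertex of $G$ lies within $R$ of a vertex on some path of $\mathcal{C}_r$, so $\bigcup_{Q\in\SP{P_r}}\ballp{R+3\delta+1}{V(Q)}=V(G)$; as $\pi_r$ has length $2k$ we have $|\SP{P_r}|=|P_r|=k$, which is the assertion of the lemma. Throughout, let $v$ be the vertex witnessing that $P_r$ is a $(2\delta+\tfrac12)$-shallow pairing, so $\grom{x}{y}{v}\le 2\delta+\tfrac12$ for every $\{x,y\}\in P_r$; recall also (by the lemma following \Cref{def:pairing}, with $\tau(G)=\delta$) that $\dist{v}{\isom{x}{y}}\le 3\delta+\tfrac12$ for every $\{x,y\}\in P_r$, i.e. $v$ is within $3\delta+\tfrac12$ of every path of $\SP{P_r}$. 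I will also use the elementary facts that $\grom{x}{y}{z}\le\dist{z}{\isom{x}{y}}\le\grom{x}{y}{z}+\delta$ in a $\delta$-thin graph, and that a point on one side of a geodesic triangle is within $\delta$ of one of the other two sides.

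Fix a path $\isom{r}{x_i}\in\mathcal{C}_r$, a vertex $w\in V(\isom{r}{x_i})$, and let $x_\ell$ be the $P_r$-partner of $x_i$. If $x_\ell=r$, then $\isom{x_i}{r}\in\SP{P_r}$ and the geodesic bigon $\isom{r}{x_i}\cup\isom{x_i}{r}$ is $\delta$-thin, so $\dist{w}{\isom{x_i}{r}}\le\delta$ and we are done; so assume $x_\ell\ne r$, let $x_{j_0}$ be the $P_r$-partner of $r$, and note that both $\isom{x_i}{x_\ell}$ and $\isom{r}{x_{j_0}}$ belong to $\SP{P_r}$. Consider the geodesic triangle $\isom{r}{x_i}\cup\isom{r}{x_\ell}\cup\isom{x_i}{x_\ell}$. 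If $\dist{r}{w}\ge\grom{x_i}{x_\ell}{r}$ then, by $\delta$-thinness, the twin of $w$ lies on $\isom{x_i}{x_\ell}$, so $\dist{w}{\isom{x_i}{x_\ell}}\le\delta$ and we are done. So the only remaining case is $\dist{r}{w}<\grom{x_i}{x_\ell}{r}$, in which $w$ lies (up to $\delta$) on the common initial segment of $\isom{r}{x_i}$ and $\isom{r}{x_\ell}$.

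For this remaining case I would bring in $v$. From $\grom{x_i}{x_\ell}{v}\le 2\delta+\tfrac12$ one gets, via the triangle $\isom{x_i}{v}\cup\isom{v}{x_\ell}\cup\isom{x_i}{x_\ell}$, that $\isom{x_i}{v}$ stays within $\delta$ of $\isom{x_i}{x_\ell}$ except on a terminal sub-path of length at most $2\delta+\tfrac12$ about $v$; symmetrically, $\grom{r}{x_{j_0}}{v}\le 2\delta+\tfrac12$ makes $\isom{r}{v}$ stay within $\delta$ of $\isom{r}{x_{j_0}}$ except on a sub-path of length at most $2\delta+\tfrac12$ about $v$. Now apply $\delta$-thinness to the triangle $\isom{r}{x_i}\cup\isom{r}{v}\cup\isom{v}{x_i}$: if $\dist{r}{w}\le\grom{x_i}{v}{r}$ then $w$ is within $\delta$ of $\isom{r}{v}$, hence --- unless the relevant point of $\isom{r}{v}$ lies within $2\delta+\tfrac12$ of $v$ --- within $2\delta$ of $\isom{r}{x_{j_0}}$; and if $\dist{r}{w}>\grom{x_i}{v}{r}$ then $w$ is within $\delta$ of $\isom{v}{x_i}$, hence --- unless the relevant point of $\isom{v}{x_i}$ lies within $2\delta+\tfrac12$ of $v$ --- within $2\delta$ of $\isom{x_i}{x_\ell}$. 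In the two exceptional situations $w$ lies within $3\delta+\tfrac12$ of $v$. The main obstacle is precisely this ``$w$ near $v$'' situation: the crude estimate $\dist{w}{\SP{P_r}}\le\dist{w}{v}+\dist{v}{\SP{P_r}}\le(3\delta+\tfrac12)+(3\delta+\tfrac12)$ loses (roughly) a factor two, so one must instead keep track of the thin-triangle twin of $w$ and compare it directly with the appropriate point of $\isom{x_i}{x_\ell}$ or $\isom{r}{x_{j_0}}$ rather than detouring through $v$; the constraint $\dist{r}{w}<\grom{x_i}{x_\ell}{r}$ (and its counterpart from the other exceptional case), together with the shallow-pairing bound at $v$, pins down how far along $\isom{r}{x_i}$ the vertex $w$ can be, and this is what keeps the final additive loss at $3\delta+1$.
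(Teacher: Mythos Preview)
Your argument is not complete, and you acknowledge as much: in the ``$w$ near $v$'' case you only sketch what \emph{should} be done and assert that it ``keeps the final additive loss at $3\delta+1$'', without carrying out the computation. If one actually follows your route---tracking $w$ through the thin triangle $\isom{r}{x_i}\cup\isom{r}{v}\cup\isom{v}{x_i}$, then through $\isom{r}{v}\cup\isom{v}{x_{j_0}}\cup\isom{r}{x_{j_0}}$---each hop costs a $\delta$, and the residual segment near $v$ contributes up to $2\delta+\tfrac12$; the best I can extract this way is about $4\delta+\tfrac12$, not $3\delta+1$. The vague appeal to the constraint $\dist{r}{w}<\grom{x_i}{x_\ell}{r}$ does not, by itself, pin things down tightly enough.

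The paper avoids this accumulation entirely by a different idea: it never routes through the witness vertex geometrically. Instead it uses the two shallow-pairing inequalities $\grom{x}{x'}{m}\le 2\delta+\tfrac12$ and $\grom{r}{r'}{m}\le 2\delta+\tfrac12$ purely algebraically (add them, then add $\dist{r}{x}$ to both sides) to derive
\[
\grom{r}{x'}{x}\ \ge\ \grom{r}{r'}{x}-(4\delta+1).
\]
This compares two Gromov products \emph{at $x$}, and hence two split points $m_x,m_r$ on the single geodesic $\isom{r}{x}$, showing $\dist{m_x}{m_r}\le 4\delta+2$. Then one thin-triangle step (triangle on $r,x,x'$ or on $r,x,r'$) suffices: any $z$ on $\isom{r}{x}$ is within $\delta$ of $\isom{x}{x'}$, or within $\delta$ of $\isom{r}{r'}$, or within $2\delta+1$ of one of $m_x,m_r$, each of which is within $\delta$ of the corresponding path. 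That is what yields the sharp $3\delta+1$. Your geometric detour through $v$ is the missing idea; to reach the stated bound you would need to recover the algebraic inequality above (or an equivalent) rather than chaining thin triangles.
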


\begin{proof}
Let $u\in V(G)$ and $P=\isom{r}{x} \in \mathcal{C}_r $ be an $r$-path with $\dist{u}{P} \leq R$. Let $\{x',r'\}\subseteq \pi_r$ be the vertices such that $\{x,x'\}\in P_r$ and $\{r,r'\}\in P_r$. 
%In other words, $x'$ and $r'$ are vertices which are paired respectively with $x$ and $r$ in the $(2\tau(G)+\frac{1}{2})$-shallow pairing $P_r$ of $\pi_r$. 
By definition of a $(2\tau(G)+\frac{1}{2})$-shallow pairing, we have that $\grom{x}{x'}{m} \le 2\delta+\frac{1}{2}$  and $\grom{r}{r'}{m} \le 2\delta+\frac{1}{2}$ which imply the following inequalities: 
$$ \dist{x}{m} + \dist{m}{x'} - \dist{x}{x'} \leq 4\delta + 1 \Rightarrow \dist{x}{x'} \geq \dist{x}{m} + \dist{m}{x'} - (4\delta+1)$$ $$ \dist{r}{m} + \dist{m}{r'} - \dist{r}{r'} \leq 4\delta+1 \Rightarrow \dist{r}{r'} \geq \dist{r}{m} + \dist{m}{r'} - (4\delta+1) $$
Combining the above inequalities we derive:
%\begin{multline*}
%\dist{x}{x'} + \dist{r}{r'} \geq (\dist{x}{m} + \dist{m}{r}) + %(\dist{x'}{m} + \dist{m}{r'}) - (8\delta+2) \geq \dist{x}{r} + \dist{x'}{r'}\\  - (8\delta+2)
%\end{multline*}
\begin{multline}\label{eq:2}
\dist{x}{x'} + \dist{r}{r'} \geq (\dist{x}{m} + \dist{m}{r'}) + (\dist{x'}{m} + \dist{m}{r}) - (8\delta+2) \geq \dist{x}{r'}\\ + \dist{x'}{r}  - (8\delta+2)
\end{multline}
Adding $\dist{r}{x}$ to both sides of (\ref{eq:2}) we get: $$ \dist{r}{x} + \dist{x}{x'} - \dist{r}{x'} \geq \dist{r}{x} + \dist{x}{r'} - \dist{r}{r'} - (8\delta+2) $$ which further implies: 
\begin{equation}\label{eq:3}
    \grom{r}{x'}{x} \geq \grom{r}{r'}{x} - (4\delta+1)
\end{equation} 

Recall that $P=\isom{r}{x}\in \mathcal{C}_r$. Let $m_x$ be the point of the geodesic $P$ such that $\dist{x}{m_x} = \lfloor\grom{r}{x'}{x}\rfloor$  and $m_r$ be the point of the geodesic $P$ such that $\dist{x}{m_r} = \lceil \grom{r}{r'}{x} \rceil$.  Let $z$ be the vertex of $P$ with $\dist{u}{z} \leq R$. Consider the following cases.

\begin{itemize}
    \item If $z$ lies in the $(x,m_x)$-subpath of $P$, consider any isometric path $\isom{r}{x'}.$ Since the geodesic triangle $P\cup \isom{r}{x'}\cup \isom{x}{x'}$ is $\delta$-thin, we have that $\dist{z}{\isom{x}{x'}} \leq \delta$. Hence, $\dist{u}{\isom{x}{x'}} \leq \dist{u}{z} + \dist{z}{\isom{x}{x'}} \le R + \delta$.

    \item If $z$ lies in the $(r,m_r)$-subpath of $P$, 
    consider any isometric path $\isom{r'}{x}.$ Since the geodesic triangle $P\cup \isom{r}{r'}\cup \isom{r'}{x}$ is $\delta$-thin, we have that 
    $\dist{z}{\isom{r}{r'}} \leq \delta.$ Hence, $\dist{u}{\isom{r}{r'}} \leq \dist{u}{z} + \dist{z}{\isom{r}{r'}} \leq R + \delta$.

    \item Otherwise, $z$ must lie in the $(m_r,m_x)$-subpath of $P$. Due to inequality (\ref{eq:3}) we have that  $\dist{m_r}{m_x}\le 4\delta+2.$  This implies either $\dist{z}{m_x} \leq 2\delta+1$ or 
    $\dist{z}{m_r} \leq 2\delta+1$. Since the geodesic triangles $P\cup \isom{r}{x'}\cup \isom{x}{x'}$ and $P\cup \isom{r}{r'}\cup \isom{x}{r'}$ are
    $\delta$-thin, we have $\dist{m_x}{\isom{x}{x'}}\le \delta$ and $\dist{m_r}{\isom{r}{r'}}\le \delta.$ If $\dist{z}{m_x} \leq 2\delta+1$ then
    $\dist{u}{\isom{x}{x'}} \le \dist{u}{z}+\dist{z}{m_x}+\dist{m_x}{\isom{x}{x'}}\le  R+3\delta+1$.  Otherwise, $\dist{z}{m_r} \leq 2\delta+1$ and $\dist{u}{\isom{r}{r'}} \le \dist{u}{z}+\dist{z}{m_r}+\dist{m_r}{\isom{r}{r'}} \le R+3\delta+1$. 
\end{itemize}
In all three above cases, either
$\dist{u}{\isom{x}{x'}}\le R+3\delta+1$ or $\dist{u}{\isom{r}{r'}}\le R+3\delta+1.$ We conclude that, for any $u\in V(G)$,  the collection $\SP{P_r}$ contains an isometric path $Q$ such that $\dist{u}{Q}\le R+3\delta+1,$ i.e. $\SP{P_r}$ is a $(R+3\delta+1)$-cover. 
\end{proof}

\subsection{Proof of \Cref{thm:thinness}}\label{sec:complete}

    \begin{algorithm}[t]
    \DontPrintSemicolon
   \begin{small}
   \SetKwInOut{KwIn}{Input}
       \SetKwInOut{KwOut}{Output}
       \KwIn{A $\delta$-thin graph $G$, an integer $k\leq |V(G)|$.}
       \KwOut{ A collection of $k$ isometric paths with eccentricity at most $R^*_k+(6\tau(G)+1)$ }

       Let $\mathcal{C}_u$ be the $(u,R_u)$-cover of $G$ returned by Algorithm \ref{algo:guaranteed-rooted-path} with $G$ and $k$ as input.
    
       Let $\pi$ be the even profile consisting of $u$ and every other end vertex of the $u$-paths in $\mathcal{C}_u$.

    Computes a $(2\tau(G)+\frac{1}{2})$-shallow pairing $P$ of $\pi.$
       
      \Return $\mathcal{S}(P)$.
   
      \caption{}\label{algo:k-geodesic center}
       
   \end{small}
    \end{algorithm}

   Let $G$ be a connected graph and $\delta=\tau(G)$ its thinness. 
   %Then, due to Proposition~\ref{thm:hyp-thin} the thinness $\delta$ of $G$ is at most $4\delta$. 
   Let $\mathcal{C}^*$ be a $\opt{k}$-cover of $G,$ $\pi_{\mathcal{C}^*}$ the profile containing the end-vertices of paths in $\mathcal{C}^*$ and $r\in \pi_{\mathcal{C}^*}$. Consider the set $\mathcal{C}^*_r = \left\{ \isom{r}{x} \colon x\in \pi_{\mathcal{C}^*} \setminus \{r\} \right\}$. Due to \Cref{lem:path-tree}, we have
   \begin{equation}\label{eq1}
   \mathcal{C}^*_r \text{ is a } \left(r,\opt{k}+\delta\right)\text{-cover of } G.    
   \end{equation}
    Now let $\left(\mathcal{C}_u,R_u\right)$ be the output of Algorithm~\ref{algo:guaranteed-rooted-path}. By Lemma \ref{lem:correct-rooted-cover}, there is no $(r,R')$-cover of size $2k-1$ with $R'<R_u.$ Due to (\ref{eq1}), there exists a $(r,\opt{k}+\delta)$-cover of $G$ of size $2k-1.$ Hence, $R_u \le \opt{k} + \delta$ and $\mathcal{C}_u$ has eccentricity at most $R_u+2\delta\le \opt{k}+3\delta.$ Let $\pi_u$ be the profile consisting of $u$ and every other end vertex of the $u$-paths in $\mathcal{C}_u$. Observe that $\pi_u$ is an even profile of length of $2k.$ By \Cref{lem:algo-shallow-pairing}, it is possible to compute in $O(mn^2)$-time a $(2\tau(G)+\frac{1}{2})$-shallow pairing $P'$ of $\pi_u$. Due to \Cref{lem:shallow-pairing-bound}, $\SP{P'}$ is an $(\opt{k} + 6\delta+1)$-cover of size $k.$ 
    %Hence, $\mathcal{C}'$ is an $(\opt{k} + 6\delta+1)$-cover of $G$ (where $\delta$ is the thinness of $G$) and an $(\opt{k} + 24\delta+1)$-cover of $G$ (where $\delta$ is the hyperbolicity of $G$). 
    This completes the proof of \Cref{thm:thinness}. \Cref{algo:k-geodesic center} describes our complete algorithm for $k$-\textsc{Geodesic Center} on $\delta$-thin graphs. %\todo{Added a line here. Is it OK? Yes I think it's ok. Should we include the running time and/or refer to Algorithm 3 in the statement of Theorem 10 ? Ans: I agree with the changes in the statement of Theorem 10.}

    Clearly, the running times of \Cref{algo:rooted-paths} and \Cref{algo:guaranteed-rooted-path} are $O(k(n+m))$ and $O(nk(n+m)\log n)$, respectively. Due to \Cref{lem:algo-shallow-pairing}, computing a shallow pairing takes $O(mn^2)$-time. 
    %(We assume that the thinness as part of the input). 
    Therefore, the total running time of Algorithm \ref{algo:k-geodesic center} is $O(mn^2\log n)$. 

    \subsection{The special case of trees}

    In case of trees, the same algorithmic approach leads to an exact polynomial time algorithm. Indeed, since trees are $0$-hyperbolic, Lemma \ref{lem:correct-rooted-cover} implies that Algorithm \ref{algo:guaranteed-rooted-path} computes a rooted $(r,R)$-cover $\mathcal{C}$ of size $2k-1$ such that there is no $(r',R')$-cover with $R'<R.$ By Lemma \ref{lem:rooted-paths}, an optimal $R^*$-cover of size $k$ can be transformed into a rooted $(r',R^*)$-cover of size $2k-1.$ Hence, $R^*\ge R.$ Since trees satisfy the pairing property, any $(r,R)$-cover of size $2k-1$ can be transformed into an $R$-cover of size $k.$ This implies $R \ge R^*.$ Hence, in case of trees, $R=R^*$ and the solution returned by Algorithm \ref{algo:k-geodesic center} is an optimal $R^*$-cover of size $k$.
    
    % \todo{Do you have any suggestion on where we can submit this paper? Is it ok for you to submit directly to a journal or you prefer to submit first to a conference ? In the first case, we could try to submit to a journal in which [22] or [23] appeared (i.e. JGAA or TCS), what do you think ? Another alternative is Networks where [27] appeared.}
 %\todo{I propose to add Algorithm 3. What do you think about that ? If you agree then we have to check in the text where we should refer to it. Ans: Good idea to include Algorithm 3.}

\newcommand{\specialset}[4]{S^{#1}_{#2}\left( #3, #4\right)}

% Let $X^*$ be the set of end vertices of the paths of $\mathcal{C}^*_r$. Observe that $|X^*|=2k$, and by \Cref{lem:pairing-algo} it is possible to compute a shallow pairing $\mathcal{S}^*$ of $X^*$. Due to \Cref{pairing-approx}, $\mathcal{S}^*$ is a $(R+3\delta+1)$-cover of $G$. Hence, 

\newcommand{\variableVertex}[3]{v^{#1}_{\left(#2,#3\right)}}

\newcommand{\subcubic}[1]{B\left(#1\right)}
\newcommand{\subcubicp}[1]{B'_k\left(#1\right)}
\newcommand{\partialG}[1]{P_k\left(#1\right)}

\section{NP-hardness for partial grids}\label{sec:hard}

In this section we prove \cref{thm:hard}. Our proof is an adaptation of the NP-hardness of $1$-\textsc{Geodesic Center} on planar bipartite graphs proved by Dragan \& Leitert (Corollary 8, \cite{dragan2017minimum}). First we prove the following lemmas.

\begin{lemma}\label{lem:cov-subdivide}
    Let $G$ be a graph and $H=G_{\ell}$ for some $\ell\geq 1$. Then for integers $m,k$, if $G$ has an $m$-cover of size $k$ then $H$ has an $\left( m\ell+\left\lfloor \ell/2\right\rfloor\right)$-cover of size $k$.
    % \todo{Is it true that the lemma holds with $\left\lfloor \ell/2\right\rfloor$ instead of $\left\lceil \ell/2\right\rceil$ ? Indeed, any vertex of $H$ is at distance at $\left\lfloor \ell/2\right\rfloor$ from some original vertex.}
\end{lemma}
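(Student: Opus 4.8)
The plan is to lift the given cover of $G$ to $H=G_\ell$ edge-by-edge and control the covering radius using a single metric identity. Concretely, let $\mathcal{C}$ be an $m$-cover of $G$ with $|\mathcal{C}|=k$. For each $P\in\mathcal{C}$ let $P'$ be the path of $H$ obtained by replacing every edge of $P$ with the length-$\ell$ path that subdivides it in $H$, and set $\mathcal{C}':=\{P':P\in\mathcal{C}\}$. I would first establish the identity $d_H(a,b)=\ell\cdot d_G(a,b)$ for every pair of original vertices $a,b$: the upper bound comes from lifting a $G$-geodesic from $a$ to $b$, and the lower bound from observing that any $a$--$b$ path in $H$ traverses a sequence of complete subdivided edges whose images form an $a$--$b$ walk in $G$, hence has length at least $\ell\cdot d_G(a,b)$. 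An immediate consequence is that if $P$ is an $(a,b)$-isometric path of $G$ then $P'$ is an $(a,b)$-isometric path of $H$ (its length is $\ell\cdot d_G(a,b)=d_H(a,b)$). Since distinct paths of $G$ contain distinct sets of original vertices, the lifting map is injective and $|\mathcal{C}'|=k$.

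Next I would verify that $\mathcal{C}'$ is an $\bigl(m\ell+\lfloor\ell/2\rfloor\bigr)$-cover of $H$ by a two-case analysis on a vertex $w\in V(H)$. If $w$ is an original vertex, choose $P\in\mathcal{C}$ and $u\in V(P)$ with $d_G(w,u)\le m$; then $u\in V(P')$ and $d_H(w,u)=\ell\,d_G(w,u)\le m\ell\le m\ell+\lfloor\ell/2\rfloor$. If $w$ is a subdivision vertex, it lies in the interior of the subdivided copy of some edge $e=(a,b)$ of $G$; taking $a$ to be the original endpoint of $e$ nearer to $w$ along that copy gives $d_H(w,a)\le\lfloor\ell/2\rfloor$ (the nearer endpoint is at distance $\min(j,\ell-j)\le\lfloor\ell/2\rfloor$). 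Applying the previous case to the original vertex $a$ yields $P\in\mathcal{C}$ and $u\in V(P')$ with $d_H(a,u)\le m\ell$, so the triangle inequality gives $d_H(w,u)\le d_H(w,a)+d_H(a,u)\le m\ell+\lfloor\ell/2\rfloor$. Hence every vertex of $H$ is within distance $m\ell+\lfloor\ell/2\rfloor$ of some path in $\mathcal{C}'$.

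The argument is essentially routine bookkeeping; the only point requiring a little care is the metric identity $d_H(a,b)=\ell\,d_G(a,b)$ for original vertices, and in particular the claim that an $H$-geodesic between two original vertices decomposes into whole subdivided edges. I would justify this by noting that every internal vertex of a subdivided edge separates that edge's two original endpoints in $H$, so a shortest path entering such a segment cannot leave it before reaching one of its endpoints; thus the geodesic visits a sequence of original vertices forming a $G$-walk, each consecutive pair linked by a full length-$\ell$ segment. Everything else follows from the triangle inequality and the definition of an $R$-cover.
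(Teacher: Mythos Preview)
Your proof is correct and follows essentially the same approach as the paper: lift each path of the $m$-cover to its $\ell$-subdivision in $H$, then handle original and subdivision vertices separately via the triangle inequality. You supply more detail than the paper does (in particular, you explicitly justify the metric identity $d_H(a,b)=\ell\,d_G(a,b)$ for original vertices and the isometry of the lifted paths), but the argument is the same.
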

\begin{proof}
    Let $\mathcal{C}$ be an $m$-cover of $G$ of size $k$ and $\mathcal{C}'$ be a set of paths in $H$ which are $\ell$-subdivision of the paths in $\mathcal{C}$. Clearly, all paths in $\mathcal{C}'$ consists of isometric paths in $H$. Let $u$ be an original vertex of $H$ and $P\in \mathcal{C}$ be a path such that $\dist{u}{V(P)}\leq m$ in $G$. Let $P'\in \mathcal{C}'$ be the $\ell$-subdivision of $P$. Clearly, $\dist{u}{V(P')}\leq m\ell$. Now consider a vertex $u\in V(H)$ which is not a vertex of $G$. Hence there exists an original vertex $u'\in V(H)$ with $\dist{u}{u'}\leq \left\lfloor \ell/2\right\rfloor$. Let $P\in \mathcal{C}$ be a path such that $\dist{u'}{V(P)}\leq m$ in $G$ and $P'\in \mathcal{C}'$ be the $\ell$-subdivision of $P$.  Then $\dist{u}{V(P')} \leq \dist{u}{u'} + \dist{u'}{V(P')} \leq m\ell+ \left\lfloor \ell/2\right\rfloor$. 
\end{proof}
 Let $G$ be a graph and $H=G_{\ell}$ for some $\ell\geq 1$.  For an isometric path $P$ of $H$ between two original vertices $u,v$, let $G(P)$ denote the $(u,v)$-isometric path in $G$ such that $P$ is an $\ell$-subdivision of $G(P)$. Intuitively, $G(P)$ is the original isometric path whose $\ell$-subdivision created $P$ in $H$. 
 
\begin{lemma}\label{lem:subdivide-original}
    Let $G$ be a graph and let $H=G_{\ell}$ for some $\ell\geq 1$. Let $P$ be an isometric path of $H$ between two original vertices $u,v$. Let $w$ be an original vertex of $G$ such that $\dist{w}{V(P)} < (r+1)\ell$ for some positive integer $r$. Then, for $Q=G(P)$ we have $\dist{w}{V(Q)}\leq r$. 
    % \todo{Since $w$ is an original vertex its distance to $V(P)$ should be a multiple of $\ell$ and we can suppose that $\dist{w}{V(P)} < (r+1)\ell$ instead of $\dist{w}{V(P)} \leq r\ell+\left\lceil \ell/2\right\rceil$}
\end{lemma}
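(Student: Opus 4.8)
The plan is to choose a vertex $z\in V(P)$ realizing the distance from $w$ to $P$ in $H$, so that $d_H(w,z)=d_H(w,V(P))<(r+1)\ell$, and hence $d_H(w,z)\le(r+1)\ell-1$ since graph distances are integers. I would then split into two cases according to whether $z$ is an original vertex of $H$ or an internal subdivision vertex of some subdivided edge.

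The two facts I would record first are elementary but essential. Since $P$ is an isometric path of $H$ between the original vertices $u,v$, it is exactly the $\ell$-subdivision of $Q=G(P)$: the original vertices lying on $P$ are precisely the vertices of $Q$, two consecutive ones along $P$ are adjacent in $G$, and $P$ traverses a whole subdivided edge of length $\ell$ between them. Second, for any two original vertices $a,b$ one has $d_H(a,b)=\ell\cdot d_G(a,b)$; in particular $d_H(w,a)$ is a multiple of $\ell$ whenever $a$ is an original vertex.

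If $z$ is an original vertex, then $z\in V(Q)$ and $\ell\cdot d_G(w,z)=d_H(w,z)<(r+1)\ell$, so $d_G(w,z)\le r$ and we are done. If instead $z$ is an internal subdivision vertex, it lies strictly inside the subdivided edge of $P$ joining two consecutive original vertices $a,b\in V(Q)$ (which are adjacent in $G$). Every internal subdivision vertex of that edge has degree $2$ in $H$, so any path from $w$ to $z$ in $H$ must pass through $a$ or $b$; if $w\in\{a,b\}$ we are immediately done since then $d_G(w,V(Q))=0\le r$. Otherwise a shortest path from $w$ to $z$ passes through, say, $a$, and using $z\ne a$ we get $d_H(w,a)\le d_H(w,z)-1\le(r+1)\ell-2<(r+1)\ell$. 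Because $d_H(w,a)$ is a multiple of $\ell$, this forces $d_H(w,a)\le r\ell$, i.e.\ $d_G(w,a)\le r$, and $a\in V(Q)$. In all cases $d_G(w,V(Q))\le r$.

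I do not expect a genuine obstacle here; the only point that deserves attention is that the naive triangle-inequality estimate $d_H(w,a)\le d_H(w,z)+d_H(z,a)$ on its own only yields $d_G(w,\cdot)\le r+1$, and the sharpening to $r$ comes from combining two cheap observations: that $z$ sits strictly inside a subdivided edge (costing at least one unit of distance to reach its nearest original endpoint) and that distances between original vertices of $H$ are multiples of $\ell$ (so any such value strictly below $(r+1)\ell$ is automatically at most $r\ell$).
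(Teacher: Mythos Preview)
Your proof is correct and follows essentially the same idea as the paper's: pick a vertex of $P$ closest to $w$ in $H$ and show that an original vertex of $P$ (hence of $Q$) lies within distance $r\ell$ of $w$. The paper is slightly terser---it asserts outright that the closest vertex $w'$ is already original (your Case~2 in effect justifies this claim) and then counts original vertices along a shortest $(w,w')$-path in $H$ rather than invoking the scaling identity $d_H=\ell\cdot d_G$ directly; your explicit case split makes the argument a bit more self-contained.
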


\begin{proof}
    Let $w'\in V(P)$ be a vertex which is closest to $w$ in $H$ and $P'$ be an $(w,w')$-isometric path in $H$. Clearly,  $w'$ is an original vertex and therefore $G(P')$ exists. Observe that, the number of original vertices in $P'$ is at most $r+1$. (Otherwise length of $P'$ is at least $(r+1)\ell$ in $H$ which is a contradiction.) Hence length of $G(P)$ is at most $r$ in $G$. 
\end{proof}

  Dragan \& Leitert \cite{dragan2017minimum} reduced the NP-complete \textsc{Planar Monotone 3-SAT}~\cite{de2010optimal} to show that $1$-\textsc{Geodesic Center} is NP-hard on bipartite planar subcubic graphs. Given an \textsc{Planar Monotone 3-SAT} instance $I$, the authors constructed a planar bipartite subcubic graph $\subcubic{I}$ and an integer $m'$ with the following properties. 

 \begin{itemize}
     \item $\subcubic{I}$ has an isometric path with eccentricity at most $m'$ if and only if $I$ is satisfiable;

     \item there are two special cut vertices $v_0,v_n$ of $\subcubic{I}$ such that any isometric path with eccentricity at most $m'$ will contain $v_0$ and $v_n$.
 \end{itemize}
  
To prove our result, we modify the graph $\subcubic{I}$ slightly.  First construct a gadget as follows. Take a path $P$ of length $2k$ and let the vertices of $P$ be $u_1, u_2, \ldots, u_{2k+1}$. For each $j\in [2,2k]$, take a new path $Q_j$ of length $m'$ and make one of the end-vertex of $Q_j$ adjacent to $u_j$. Let $T$ be the union of $P$ and $Q_j, j\in [2,2k-1]$. Now make the vertex $v_0$ adjacent to $u_1$. %\todo[]{Since only vertices $u_j,$ $j\in [2,2k-1]$ are connected to a path $Q_j.$ It seems to me that the length of the path should be $2k$ instead of $2k+1$ ?}

We call the modified graph $\subcubicp{I}$. It is easy to verify that a set $\mathcal{P}$ of isometric paths in $\subcubicp{I}$ is an $m'$-cover if and only if the following holds:

\begin{itemize}
    \item There are $k-1$ isometric paths in $\subcubicp{I}$ whose vertices completely lie in $T$.

    \item There is a special isometric path $P$ in $\subcubicp{I}$ containing $v_0,v_n$ such that $P$ has eccentricity $m'$ in $\subcubic{I}$.
\end{itemize}

The above discussion implies that $\subcubicp{I}$ has an $m'$-cover of size $k$ if and only if $\subcubic{I}$ has an $m'$-cover of cardinality one. Moreover, $\subcubicp{I}$ is planar and has maximum degree at most $3$. Now we construct a $H=\partialG{I}$ by applying \Cref{lem:equal} on $G=\subcubicp{I}$ and $\ell$ be the integer such that $H=G_{\ell}$. We prove that $G$ has an $m'$-cover of cardinality $k$ if and only if $H$ has an $\left( m'\ell+\left\lceil \ell/2\right\rceil\right)$-cover of cardinality $k$.

    If $G$ has an $m'$-cover of cardinality $k$, then \Cref{lem:cov-subdivide} implies that $H$ has an $\left( m'\ell+\left\lceil \ell/2\right\rceil\right)$-cover of cardinality $k$. Now assume that $H$ has an $\left( m'\ell+\left\lceil \ell/2\right\rceil\right)$-cover $\mathcal{C}$ of cardinality $k$. Let $T_{\ell}$ be the induced subgraph of $\partialG{I}$ isomorphic to the $\ell$-subdivision of $T$. The structure of $T_{\ell}$ implies that there will be $k-1$ isometric paths whose vertices lies completely in $T_{\ell}$. Let $B_{\ell}$ denote the subgraph of $\partialG{I}$ isomorphic to the $\ell$-subdivision of $\subcubic{I}$. Since the original vertices $v_0$ and $v_n$ are still cut vertices in $\partialG{I}$, there is a special path $P\in \mathcal{C}$ such that all vertices in $B_{\ell}$ is at a distance at most $m'\ell+\left\lceil \ell/2\right\rceil$ from $P$. Now apply \Cref{lem:subdivide-original} to conclude that all vertices of $\subcubic{I}$ which also belongs to $\subcubicp{I}$ is at distance $m'$ from $G(P)$. Therefore, the set $\mathcal{C}'=\{G(P)\colon P\in \mathcal{C}\}$ is an $m'$-cover of $G$.

The above discussion implies that $\partialG{I}$ has an $\left( m'\ell+\left\lceil \ell/2\right\rceil\right)$-cover of cardinality $k$ if and only if $I$ is satisfiable. This completes the proof.

\section{Conclusion}

Constant factor approximability of $k$-\textsc{Geodesic Center} in general graphs for $k\geq 2$ remains open. From both application and theoretical point of views, approximability of $k$-\textsc{Geodesic Center} on planar graphs is an important open question. An approximation scheme even for \textsc{MESP} on planar graphs is open. Studying whether $k$-\textsc{Geodesic Center} (for arbitrary $k$) is fixed-parameter tractable with respect to tree-width or hyperbolicity are interesting research questions. Polynomial solvability of $k$-\textsc{Geodesic Center} in superclasses of trees (e.g. outerplanar graphs), solid grids (i.e. partial grids where internal faces have unit area), are  open as well. 
For graphs with bounded isometric path complexity (including hyperbolic graphs), \textsc{IPC} admits a constant factor approximation algorithm \cite{ChakrabortyCFV23}. It would be interesting to explore if $k$-\textsc{Geodesic Center} admits an additive approximation algorithm on graphs with bounded isometric path complexity? Finally, investigating the approximability or fixed-parameter tractability of $k$-\textsc{Geodesic Center} on weighted graphs are interesting directions.

\bibliographystyle{plain}

\bibliography{references}

\end{document}